\newcommand{\Sb}{{\mathbf S}}
\newcommand{\ab}{{\mathbf a}}
\newcommand{\eb}{{\mathbf e}}
\newcommand{\vb}{{\mathbf v}}
\newcommand{\wb}{{\mathbf w}}
\newcommand{\xb}{{\mathbf x}}
\newcommand{\yb}{{\mathbf y}}
\newcommand{\zb}{{\mathbf z}}
\newcommand{\gammab}{{\boldsymbol{\gamma}}}
\renewcommand{\Re}{{\mathbb{R}}}
\newcommand{\Rd}{{\mathbb R}}
\newcommand{\zerob}{\boldsymbol{0}}
\newtheorem{theorem}{Theorem}[section]
\DeclareMathOperator*{\argmin}{arg\,min}
\title{Improving M-SBL for Joint Sparse Recovery using a Subspace Penalty}
\author{ Jong Chul Ye,  Jong Min Kim and Yoram Bresler}
\begin{document}
%
\maketitle
\vspace{-1.5cm}
\begin{abstract}
\baselineskip 0.2in
The  multiple measurement vector  problem (MMV) is a generalization of the compressed sensing problem that addresses the recovery of a set of  jointly sparse signal vectors.
One of the important contributions of this paper is to reveal that the seemingly least related  state-of-art MMV joint sparse recovery algorithms - M-SBL (multiple sparse Bayesian learning) and subspace-based hybrid greedy algorithms - have a very important link. More specifically,  we show that  replacing the $\log\det(\cdot)$  term in M-SBL  by a  rank proxy that exploits the spark reduction property  discovered in subspace-based joint sparse recovery algorithms, provides significant improvements. In particular, if we use the  Schatten-$p$ quasi-norm as the corresponding rank proxy, the global minimiser of the proposed algorithm
becomes identical to the true solution as $p \rightarrow 0$.
Furthermore, under the same regularity conditions,  we show that the convergence to a local minimiser is guaranteed
using an alternating   minimization algorithm that has closed form expressions for each of the minimization steps, which are convex. Numerical simulations under a variety of scenarios   in terms of SNR,  and condition number of the signal amplitude matrix demonstrate that the proposed algorithm  consistently outperforms M-SBL and other state-of-the art algorithms.
\end{abstract}

\begin{keywords}
Compressed sensing, joint sparse recovery, multiple measurement vector problem,  subspace method,  M-SBL, generalized MUSIC criterion, rank proxy, Schatten-$p$ norm
\end{keywords}
%


\noindent{
\vspace{-0.2cm}Correspondence to:\\
\vspace{-0.2cm}Jong Chul Ye,  Ph.D. ~~Professor \\\
\vspace{-0.2cm}Dept. of Bio and Brain Engineering,  KAIST \\
\vspace{-0.2cm}291 Daehak-ro Yuseong-gu, Daejon 305-701, Republic of Korea \\
\vspace{-0.2cm}Email: jong.ye@kaist.ac.kr \\
\vspace{-0.2cm}Tel: 82-42-350-4320 \\
\vspace{-0.2cm}Fax: 82-42-350-4310 \\
 }

\IEEEpeerreviewmaketitle

\newpage

\section{Introduction}
\label{sec:intro}

The  multiple measurement vector  problem (MMV) is a generalization of the compressed sensing problem, which addresses the recovery of a set of sparse signal vectors
that share a common support \cite{Kim2010CMUSIC, chen2006trs,cotter2005ssl,Mishali08rembo,Berg09jrmm,Lee2010SAMUSIC}.  
In the MMV model, let
$m$ and $N$ denote the number of sensor elements and snapshots,
respectively; and $n>m$ denote the length of the signal vectors. Then, for a given noisy observation matrix
$Y=[\mathbf{y}_1,\cdots,\mathbf{y}_N]\in\mathbb{C}^{m\times N}$ and a sensing matrix $A\in
\mathbb{C}^{m\times n}$, the multiple measurement vector (MMV)
problem can be formulated as:
\begin{eqnarray}\label{eqdefmmv}
{\rm minimize}~~~\|X\|_0\\
{\rm subject~to}~~~\|Y-AX\|_F < \delta, \notag
\end{eqnarray}
where  $\xb_j \in \Rd^n$ is the $j$-th signal, $X=[\mathbf{x}_1,\cdots,\mathbf{x}_N]\in\mathbb{R}^{n\times
N}$, $\mathbf{x}^i$ is the $i$-th row of $X$, and $\|X\|_0=|{\rm supp}X|$, where ${\rm supp}X=\{1\leq i\leq n
: \mathbf{x}^i\neq 0\}$ is the set of indices of nonzero rows in $X$. The Frobenius norm is used to measure the discrepancy between the data and the model.
Classically, pursuit
algorithms such as alternating minimization algorithm (AM) and  MUSIC (multiple signal classification) algorithm \cite{Feng97},
 S-OMP (simultaneous orthogonal matching pursuit)
\cite{tropp2006ass,chen2006trs},  M-FOCUSS
\cite{cotter2005ssl}, randomized algorithms such as REduce MMV
and BOost (ReMBo)\cite{Mishali08rembo}, and model-based compressive
sensing using block-sparsity
\cite{eldar2009compressed,baraniuk2010model} have  been applied
to the MMV problem. 

An algebraic bound for the recoverable sparisity level has been theoretically studied by  Feng and Bresler \cite{Feng97}, and  by Chen and
Huo \cite{chen2006trs} for noiseless measurement $Y$. More specifically,  if
 $X\in\mathbb{R}^{n\times N}$ satisfies
$AX=Y$ and
\begin{equation}\label{l0-bound-mmv}
\|X\|_0<  \frac{{\rm spark}(A)+{\rm rank}(Y)-1}{2} ,
\end{equation}
where  ${\rm spark}(A)$ denotes the smallest number
of linearly dependent columns of $A$,  then $X$ is the unique solution of (\ref{eqdefmmv}). This indicates that the recoverable sparsity level may increase with an
increasing number of measurement vectors. 
Indeed,  for noiseless measurement,  a MUSIC  algorithm by Feng and Bresler \cite{Feng97} is shown to achieve the performance limit when the measurement matrix is full rank.
However,  except for  MUSIC in full rank cases, the performance of the aforementioned classical MMV  algorithms   is not generally satisfactory,  falling far short of \eqref{l0-bound-mmv}  even for
the noiseless case, when only a finite number of snapshots are available.

In a noisy environment,  Obozinski {\em et al} showed that
 a near
optimal  sampling rate reduction   up to ${\rm rank}(Y)$  can be achieved using  $l_1/l_2$ mixed norm penalty \cite{obozinski2011support}. 
A similar gain was observed in computationally inexpensive greedy approaches such as compressive MUSIC (CS-MUSIC) \cite{Kim2010CMUSIC} and subspace augmented MUSIC (SA-MUSIC) \cite{Lee2010SAMUSIC}.  More specifically, Kim {\em { et al}} \cite{Kim2010CMUSIC} and Lee {\em { et al}} \cite{Lee2010SAMUSIC} independently showed that a class of hybrid greedy algorithms  that combine greedy steps with a so called {\em generalized MUSIC}  subspace criterion \cite{Kim2010CMUSIC}, or equivalently, with subspace augmentation \cite{Lee2010SAMUSIC},  can reduce the required number of measurements by up to ${\rm rank}(Y)$  in noisy environment. Furthermore, using a large system MMV model,  Kim {\em { et al}} further showed that for an i.i.d. Gaussian sensing matrix, their algorithm can asymptotically achieve the algebraic performance limit when ${\rm rank}(Y)$ increases with a particular  scaling law \cite{Kim2010CMUSIC}.
Lee {\em et al} \cite{Lee2010SAMUSIC} also showed that MUSIC can do this in the noisy case and full rank, non-asymptotically with finite data, and for realistic Fourier sensing matrices.

While  the aforementioned mixed norm  approach and subspace based greedy  approaches provide theoretical performance guarantees, there also exist a very different class of powerful MMV algorithms that are based on empirical Bayesian and Automatic Relevance Determination (ARD) principles from machine learning. Among these, the so-called multiple sparse Bayesian learning (M-SBL)  algorithm is best known \cite{wipf2007ebs}. 
Even though M-SBL is  more computationally expensive than  greedy algorithms such as CS-MUSIC or SA-MUSIC,  empirical results show that M-SBL is quite robust to noise  and to unfavorable restricted isometry property constant (RIC) of the sensing matrix \cite{KimSeqCS2012}. Moreover, M-SBL is more competitive than mixed norm approaches.
Since  Bayesian approaches are very different from classical compressed sensing, such high performance appears mysterious at first glance.  However, a recent  breakthrough by Wipf {\em { et al}}  unveiled that M-SBL can be converted to a standard compressed sensing framework with  an additional $\log\det(\cdot)$ (log determinant) penalty -   a non-separable sparsity inducing prior  \cite{wipf2011latent}.  The presence of the non-separable penalty term  is  so powerful that M-SBL performs almost as well as MUSIC.  
However,
the guarantee only applies to the full row rank case  of $X$ with  noise-free measurement vectors \cite{wipf2006bayesian,wipf2007ebs}.  However, despite its excellent  performance, compared to the mixed norm approaches or subspace greedy algorithms,  other than the  work by Wipf {\em { et al}}  \cite{wipf2011latent},  the fundamental theoretical analysis of M-SBL has been limited.

Therefore, one of  the main goals of this paper is to continue the effort by Wipf {\em { et al}}  \cite{wipf2011latent} and analyze the origin of the high performance of M-SBL, as well as  to investigate its limitations.
One of the important contributions of this paper is to show that the seemingly least related algorithms - M-SBL and subspace-based hybrid greedy algorithms - have a very important link. More specifically,  we show that  the $\log \det(\cdot)$  term  in M-SBL is  a proxy for the rank of a partial sensing matrix corresponding to  the true support. 
We then show that minimising the rank  that was discovered in subspace-based hybrid greedy algorithm to exploit the spark reduction property of MMV can indeed provide a true solution for  the MMV problem.
Accordingly,  replacing $\log \det(\cdot)$  term  in M-SBL   by 
a  Schatten-$p$ quasi-norm rank proxy provides significant performance improvements.

%

 The resulting new algorithm is no longer Bayesian  due to the use of a  {\em deterministic} penalty based on a geometric argument, so we call the new algorithm {\em subspace-penalized sparse learning (SPL)} by excluding term ``Bayesian''.  We show that as $p\rightarrow 0$ in the Schatten $p$-norm rank proxy, the global minimizers 
 of the SPL cost function are  identical to those of the original $l_0$ minimization problem.
   Furthermore,  we show that SPL can be easily implemented as an alternating minimization approach.

Using numerical simulations, we demonstrate that compared to the current state-of-art MMV algorithms such as mixed norm approaches, M-SBL, CS-MUSIC/SA-MUSIC, and sequential CS-MUSIC \cite{KimSeqCS2012}, SPL provides superior  recovery performance. Moreover, the results show that  SPL is very robust to noise, and  to the condition number of the unknown signal matrix.  

\subsection{Notation}

Throughout the paper, $\xb^i$ and $\xb_j$ correspond to the $i$-th
row and the $j$-th column of matrix $X$, respectively.  The $(i,j)$ element of $X$ is represented by $x_{ij}$.
When $S$ is an
index set, $X^S$, and $A_S$ correspond to a submatrix collecting
corresponding rows of $X$ and columns of $A$, respectively. 
For a matrix $A$,   ${\rm Tr}(A)$  is  the trace of a matrix $A$,   $A^*$ is its adjoint, $A^\dag$ denotes the Penrose-Moor psuedo-inverse,
  $|A|$ refers the determinant,   $R(A)$ denotes the range space of $A$, and $P_A$ (or $P_{R(A)}$) and $P_A^\perp$  (or $P_{R(A)}^\perp$) are the projection on the range space and its orthogonal complement, respectively.
The vector $\eb_i$ denotes an elementary unit vector whose $i$-th element is 1, and $I$ denotes an identity matrix.

A sensing matrix $A\in\mathbb{R}^{m\times n}$ is said to have a
$k$-restricted isometry property (RIP) if there exist left and right
RIP constants $0\leq \delta^L_k, \delta^R_k<1$  such that
$$(1-\delta^L_k)\|\mathbf{x}\|^2\leq \|A\mathbf{x}\|^2\leq (1+\delta^R_k)\|\mathbf{x}\|^2$$
for all $\mathbf{x}\in\mathbb{R}^n$ such that $\|\mathbf{x}\|_0\leq
k$. A single RIP constant $\delta_k = \max\{\delta^L_k,
\delta^R_k\}$ is often referred to as the RIP constant. 

\section{M-SBL: A Review}

\subsection{Algorithm Description}
   Under appropriate  assumptions of noise and signal Gaussian statistics,  one can show that M-SBL  minimizes the following cost function in a so-called $\gammab$ space \cite{wipf2007ebs}:
\begin{eqnarray}\label{eq:costgamma}
{\cal L}^\gamma(\gammab) = {\rm Tr}\left(\Sigma_y^{-1}Y Y^*\right)  +  N \log|\Sigma_y| 
\end{eqnarray}
where 
\begin{eqnarray}
\Sigma_y = \lambda I+A\Gamma A^* &, & \Gamma = \mathrm{diag}(\gammab) \ .
\end{eqnarray}
With an estimate of  $\Gamma$, which typically has a nearly sparse diagonal and may be thresholded to  be exactly sparse,  the solution of M-SBL  is given by
\begin{eqnarray}\label{eq:solution}
X = \Gamma A^*(\lambda I+A \Gamma A^* )^{-1} Y \  .
\end{eqnarray}
One of the most important contributions   by Wipf is that the minimization problem of the cost function \eqref{eq:costgamma} can be equivalently represented as the following standard sparse recovery framework \cite{wipf2011latent}:
\begin{eqnarray}\label{eq:minX}
\min_X {\cal L}^\xb(X) ,\quad {\cal L}^\xb(X)=\|Y-AX\|_F^2 + \lambda g_{msbl}(X)
\end{eqnarray}
where  $g_{msbl}(X) $ is  a   penalty given by
\begin{eqnarray}\label{eq:gII}
g_{msbl}(X) = \min_{\gammab\geq \zerob}  G(X;\gammab)
\end{eqnarray}
where
\begin{eqnarray}
G(X;\gammab) \equiv
{\rm Tr}\left(X^*\Gamma^{-1}X\right) 
+   N \log|\lambda I+A\Gamma A^*|  \ . 
\end{eqnarray}
Wipf {\em {et al}} \cite{wipf2011latent} gave a heuristic argument showing that $g_{msbl}$ corresponds to a non-separable sparsity promoting penalty, and
proposed the following alternating minimization approach to solve the minimization problem \eqref{eq:minX}.

\subsubsection{Step 1: Minimization with respect to $X$}

For a given estimate  $\gammab^{(t)}$ at the $t$-th iteration,  we can find a closed form solution for $X$ in \eqref{eq:minX}:
$$X^{(t)}=\Gamma^{(t)} A^*(\lambda I+A\Gamma^{(t)}A^*)^{-1}Y,\quad \Gamma^{(t)}=\textrm{diag}(\gammab^{(t)}).$$
For large scale problems, this can be computed using a standard conjugate gradient algorithm with an appropriate preconditioner.

\subsubsection{Step 2: Minimization with respect to $\gammab$}

In this step, for a given $X^{(t)}$ we need to solve the following minimization problem:
$$\gammab^{(t+1)}=\argmin_{\gammab\geq \zerob} G(X^{(t)},\gammab)$$
where 
\begin{equation}\label{eq:G}
G(X^{(t)},\gammab)=  {\rm Tr}\left(X^{(t)*}\Gamma^{-1}X^{(t)}\right)  +   N \log|\Sigma_y| \ . 
\end{equation}

Wipf {\em et al} \cite{wipf2011latent} 
 find the solution to $\nabla G(X^{(t)},\gammab)=\zerob$.
More specifically,   the derivative with respect to each component is given by
\begin{eqnarray}\label{eq:coord}
\frac{\partial G(X^{(t)},\gammab) }{\partial \gamma_i} = -\frac{\sum_j |x^{(t)}_{ij}|^2}{\gamma_i^2}+N \ab_i^H(\lambda I+A\Gamma A^*)^{-1}\ab_i
\end{eqnarray}
since $$\frac{\partial |\Sigma_y|}{\partial \gamma_i}=|\Sigma_y|~ \mathrm{Tr}\left(\Sigma_y^{-1}\frac{\partial \Sigma_y}{\partial \gamma_i}\right) \  .$$
Setting  the derivative to zero after fixing $\Gamma:=\Gamma^{(t)}$, this observation leads to the following fixed point update of $\gammab$:
\begin{eqnarray}\label{eq:msbl_gamma}
\gamma_i^{(t+1)}= \left(\frac{\frac{1}{N}\sum_j |x^{(t)}_{ij}|^2}{\ab_i^H(\lambda I+A\Gamma^{(t)}A^*)^{-1}\ab_i}\right)^{\frac{1}{2}} \  .
\end{eqnarray}

\subsection{Role of the Non-separable Penalty in M-SBL}

In order to develop a new joint sparse recovery algorithm that improves on M-SBL,  we provide here a new interpretation of  the role of the regularization term in M-SBL.
Note that 
 due to the non-negativity constraint for $\gammab$,  a critical solution to the minimization problem in \eqref{eq:gII} should satisfy the following first order Karush-Kuhn-Tucker (KKT) necessary conditions \cite{ChZa96}:
\begin{eqnarray*} 
\gamma_i \frac{\partial G(X ,\gammab)}{\partial \gamma_i}  &=&   \gamma_i \left(-\frac{ \sum_j |x_{ij} |^2}{\gamma_i^2} + N\ab_i^H(\lambda I+A\Gamma A^*)^{-1}\ab_i
\right)= 0,\quad \forall ~i  
\end{eqnarray*}
Hence, as $\lambda \rightarrow 0$, this leads to the following fixed point equation:
\begin{equation*}
\lim_{\lambda \rightarrow 0}\gamma_i = \lim_{\lambda \rightarrow 0} \frac{\frac{1}{N} \|\xb^i\|^2}{\gamma_i \ab_i^*(\lambda I+A\Gamma A^*)^{-1}\ab_i} \ .
\end{equation*}
If $\|\gammab\|_0<m$, using  the matrix inversion lemma,  we have
\begin{eqnarray}\label{eq:matinv}
\lambda (\lambda I+A\Gamma A^*)^{-1} &=& I-   A\Gamma^{\frac{1}{2}} \left(\lambda I  + \Gamma^{\frac{1}{2}} A^*A\Gamma^{\frac{1}{2}}\right)^{-1}\Gamma^{\frac{1}{2}} A^* \nonumber\\
&=& P^{\perp}_{A_S} + \lambda ( A_S^{\dag})^* \left(\lambda (A_S^*A_S)^{-1}+\Gamma_S\right)^{-1}A_S^\dag \ ,
\end{eqnarray}
where $S={\rm supp}(\gammab)$ denotes a nonzero support of $\gammab$ and $P_{A_S}$ denotes the orthogonal projection on the span of the columns of
$A$ indexed by $S$.
Accordingly,
$$\lim_{\lambda \rightarrow 0} { \ab_i^*(\lambda I+A\Gamma A^*)^{-1}\ab_i} = \ab_i^*(A_S^\dag)^*\Gamma_S^{-1} A_S^\dag \ab_i = \frac{1}{\gamma_i},  \quad i \in S.$$
Therefore, we have
\begin{equation}\label{eq:gamma_proj}
\lim_{\lambda \rightarrow 0}\gamma_i = \frac{1}{N} \|\xb^i\|^2, \quad i \in S.
\end{equation}
Substituting \eqref{eq:gamma_proj}  into \eqref{eq:gII} yields
\begin{eqnarray}
g_{msbl}(X) &= & \min_{\gammab\geq \zerob}  {\rm Tr}\left(X^*\Gamma^{-1}X\right) 
+   N \log|\lambda I+A\Gamma A^*|  \notag\\
&\approx& N~|S| + N \log|\lambda I + A \Gamma A^*| 
\notag\\
&=& N \|\gammab\|_0 + N \log|\lambda I + A \Gamma A^*|  \label{eq:gII2}
\end{eqnarray}
Note that the first term in \eqref{eq:gII2} imposes  row sparsity on $X$ since $\gamma_i=0$ for $\|\xb^i\|=0$ due to \eqref{eq:gamma_proj}.  Hence, the first term of M-SBL penalty is in fact  $\|X\|_0$. Then, what is the meaning of the $\log|\cdot|$ term ? 
 Wipf {\em et al} \cite{wipf2011latent} showed  that the superior performance of the M-SBL is owing to the non-separability of the term
 $\log|\lambda I + A\Gamma A^*|$ with respect to $\gammab$, which 
  can avoid many local minimizers.
In addition to this interpretation, the following section shows another  important geometric implications of the $\log\det(\cdot)$ term.

%

\section{Subspace-Penalized Sparse Learning}


\subsection{Key Observation}

In this section, we provide another interpretation of the M-SBL penalty, which  suggests  a new algorithm called subspace-penalized sparse learning (SPL) that overcomes the limitation of M-SBL.  
Note that  for any matrix $Z\in \Rd^{m\times n}$ with $m\leq n$, we have 
\begin{eqnarray}\label{eq:svd_proxy}
\log |ZZ^*+\lambda I| &=& \sum_{i=1}^m  \log (\sigma^2_i(Z)+\lambda),
\end{eqnarray}
where $\sigma_i(Z)$ denote the singular values of $Z$.  Therefore, the $\log\det()$ function is a concave  proxy for nonzero singular values, hence in the limit of $\lambda \rightarrow 0$,
\eqref{eq:svd_proxy} acts as a
proxy for ${\rm rank}(Z)$  \cite{mohan2010iterative,fazel2003log}.
This leads us  to  another interpretation:  the penalty term in M-SBL is equivalent to
\begin{equation}\label{eq:sparse_rank}
\lim_{\lambda \rightarrow 0} g_{msbl}(X) =  N  \|\gammab\|_0 +  N ~{\rm Rprox}(A\Gamma^{\frac{1}{2}})
\end{equation}
where ${\rm Rprox}(\cdot)$ dentoes a rank proxy. Thus,   the penalty 
 simultaneously imposes  the row sparsity of $X$  as well as the low rank of the matrix $A\Gamma^{\frac{1}{2}}$.
By inspection,  the first sparsity penalty term in \eqref{eq:sparse_rank} is quite intuitive, but it is not clear why ${\rm Rank}(A\Gamma^{\frac{1}{2}})$ needs to be minimized. 

In fact,  the main contribution of this paper is that we need to replace the second term, ${\rm Rprox}(A\Gamma^{\frac{1}{2}})$, by geometrically more intuitive rank proxy as follows:
\begin{equation}\label{eq:sparse_qrank}
g_{SPL}(X) =  N  \|\gammab\|_0 +  N ~{\rm Rprox}(Q^*A\Gamma^{\frac{1}{2}})
\end{equation}
where $Q$ denotes a basis for the noise subspace denoted as $R(Q) = R^\perp(Y)$.   In the following, we will  describe in detail how we arrive at the new rank penalty.

\subsection{Subspace Criteria}

A  solution $X$ for $Y=AX$ that  satisfies  $\|X\|_0\leq m$ is called a {\em basic feasible solution}  (BFS) \cite{wipf2011latent}.
Among BFSs, a solution of the following $l_0$ MMV problem is called {\em maximally sparse} solution:
\begin{eqnarray}\label{eq:prob}
(P0):\quad \min_X \|X \|_0 &,& \mbox{subject to $Y=AX$}.
\end{eqnarray}
To address \eqref{eq:prob},
subspace-based greedy algorithms such as CS-MUSIC  \cite{Kim2010CMUSIC} and SA-MUSIC \cite{Lee2010SAMUSIC} exploit
the {\em spark reduction principle} or
 or an equivalent subspace criterion using an {\em augmented signal subspace}.  More specifically, if  $r={\rm rank}(Y)$
 and $k$ denotes the number of the non-zero rows,
 the algorithms first estimate $k-r$ partial support index $I_{k-r}$, then the remaining $r$ components of the support are found using the subspace criterion.
%

One of the main contributions of this paper is that this two step approach is not necessary. Instead,   for noiseless measurements, 
  a direct  minimization of the rank of $Q^*A_I$  with respect to index $I$, $|I|\geq k$,    still guarantees to obtain the true support as shown in Theorem~\ref{thm:spark}. We believe that this is an extremely powerful result that provides an important clue to overcome the limitation of existing greedy subspace methods  \cite{Kim2010CMUSIC,Lee2010SAMUSIC}  .

\begin{theorem}\label{thm:spark}
Assume that $A\in \mathbb{R}^{m\times n}$, $X\in\mathbb{R}^{n\times r}$, $Y\in\mathbb{R}^{m\times r}$ satisfy $AX=Y$, where $\|X\|_0=k$, and the columns of $Y$ are linearly independent. 
If $A$ satisfies a RIP condition $0\leq \delta^L_{2k-r+1}(A)<1$, then  we have
$$k-r=\min_{|I|\geq k }{\rm rank}\left(Q^*A_I\right),$$
and $${\rm supp}X=\arg\min_{|I|\geq k }{\rm rank}\left(Q^*A_I\right).$$ 
%
\end{theorem}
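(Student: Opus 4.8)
My plan is to convert the rank minimization of the noise-subspace projection into an ordinary rank computation on the augmented matrix $[A_I\ Y]$, and then split the argument into an easy lower bound (needing only $k<{\rm spark}(A)$) and a uniqueness step that uses the full force of $\delta^L_{2k-r+1}(A)<1$ together with the hypothesis that the columns of $Y$ are independent, i.e. ${\rm rank}(Y)=r$. First I would record the key identity. Since the columns of $Q$ are an orthonormal basis of $R^\perp(Y)$, the kernel of $Q^*$ is exactly $R(Y)$, so the restriction of $Q^*$ to $R(A_I)$ has kernel $R(A_I)\cap R(Y)$. Rank--nullity together with the dimension formula for a sum of subspaces then gives ${\rm rank}(Q^*A_I)={\rm rank}(A_I)-\dim\big(R(A_I)\cap R(Y)\big)={\rm rank}([A_I\ Y])-r$. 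Hence it suffices to prove $\min_{|I|\ge k}{\rm rank}([A_I\ Y])=k$ with the minimum attained only at $I=S:={\rm supp}X$. Evaluation at $I=S$ is immediate: because $Y=A_SX^S$ we have $R(Y)\subseteq R(A_S)$, so ${\rm rank}([A_S\ Y])={\rm rank}(A_S)=k$.

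For the lower bound I would observe that $\delta^L_{2k-r+1}(A)<1$ forces ${\rm spark}(A)\ge 2k-r+2$, and since $r={\rm rank}(Y)\le\|X\|_0=k$ this yields $k<{\rm spark}(A)$; thus any $k$ columns of $A$ are linearly independent. Consequently, for every $I$ with $|I|\ge k$ we have ${\rm rank}(A_I)\ge k$, whence ${\rm rank}([A_I\ Y])\ge{\rm rank}(A_I)\ge k$. Combined with the value at $I=S$, this gives $\min_{|I|\ge k}{\rm rank}([A_I\ Y])=k$, i.e. $\min_{|I|\ge k}{\rm rank}(Q^*A_I)=k-r$, which is the first displayed identity of the theorem.

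The crux, and where I expect the main difficulty, is the uniqueness of the minimizer. Suppose $|I|\ge k$ attains the minimum, so ${\rm rank}([A_I\ Y])=k$; then ${\rm rank}(A_I)=k$ and $R(Y)\subseteq R(A_I)$. Choosing $I_0\subseteq I$ with $|I_0|=k$ and $A_{I_0}$ of full column rank, we get $R(A_{I_0})=R(A_I)\supseteq R(Y)$, so $Y=A_{I_0}\big(A_{I_0}^\dag Y\big)$ exhibits a solution of $AW=Y$ supported inside $I_0$ with at most $k$ nonzero rows. Because $\delta^L_{2k-r+1}(A)<1$ is precisely the threshold $k<({\rm spark}(A)+{\rm rank}(Y)-1)/2$ of the MMV uniqueness bound \eqref{l0-bound-mmv}, this solution must coincide with $X$; in particular $S\subseteq I_0$ and therefore $I_0=S$. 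Finally, every column $\ab_j$ with $j\in I$ satisfies $\ab_j\in R(A_I)=R(A_S)$, and if $j\notin S$ this would be a linear dependence among the $k+1\le 2k-r+1$ columns indexed by $\{j\}\cup S$, contradicting $\delta^L_{2k-r+1}(A)<1$; hence $I\subseteq S$, and since $|I|\ge k=|S|$ we conclude $I=S$.

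The delicate point throughout is that $|I|$ is unbounded from above, so $A_I$ may be rank-deficient; the argument circumvents this by passing to the full-rank size-$k$ subset $I_0$ and by invoking the rank-enhanced uniqueness result \eqref{l0-bound-mmv} rather than the naive per-column spark bound, which would fail once $2k>{\rm spark}(A)$. Confirming that $\delta^L_{2k-r+1}(A)<1$ exactly matches the uniqueness threshold, and that the reduction to $I_0$ legitimately produces the solution $X$ itself, is the part of the proof that requires the most care.
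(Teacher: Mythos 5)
Your proof is correct, and it differs from the paper's in a meaningful way at the uniqueness step, which is the heart of the theorem. The first half of your argument --- the identity ${\rm rank}(Q^*A_I)={\rm rank}([A_I~Y])-r$, the evaluation ${\rm rank}(Q^*A_S)=k-r$, and the lower bound via ${\rm rank}(A_I)\geq k$ --- is the same dimension count the paper performs with the rank--nullity theorem and $N(Q^*)=R(Y)$, merely repackaged through the augmented matrix $[A_I~Y]$; your packaging has the small advantage of never requiring $N(A_I)=\{0\}$, so it applies verbatim to $|I|>k$, whereas the paper first reduces to $|I|=k$ by monotonicity of the rank and leaves the $|I|>k$ case of the argmin claim implicit. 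The genuine divergence is in proving that only $I=S$ attains the minimum. The paper, after constructing a competing solution $\tilde{X}$ supported on $I$ exactly as you construct $W$ on $I_0$, does \emph{not} cite the uniqueness bound \eqref{l0-bound-mmv}; it re-derives it in situ: it picks $r$ linearly independent rows $X^Z$ of $X^S$, builds for each $i$ a nonzero $\wb_i\in N(X^{Z\setminus\{i\}})$ with $\|X\wb_i\|_0\leq k-r+1$ and $i\in{\rm supp}(X\wb_i)$, and applies the RIP to the $(2k-r+1)$-sparse vector $X\wb_i-\tilde{X}\wb_i$ to force $i\in I$. You instead observe that $\delta^L_{2k-r+1}(A)<1$ is equivalent to ${\rm spark}(A)\geq 2k-r+2$, i.e., exactly to $k<({\rm spark}(A)+{\rm rank}(Y)-1)/2$, and then invoke the Feng--Bresler/Chen--Huo result \eqref{l0-bound-mmv} as a black box to conclude $W=X$, finishing with a clean spark argument to rule out any extra column $j\in I\setminus S$ (a dependence among $k+1\leq 2k-r+1$ columns). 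Your route is shorter, modular, and handles $|I|\geq k$ uniformly; the paper's route buys self-containedness --- it never relies on the cited uniqueness theorem, whose proof is essentially the very argument the paper spells out --- and in exchange is longer and slightly sloppier (its claim that ``$i$ can be any element of ${\rm supp}X$'' tacitly requires re-choosing $Z$ to contain each $i$).
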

\begin{proof}
Since $\min_{|I|=l}{\rm rank}(Q^*A_I)$ is a nondecreasing function of $l$, we may consider $\min_{|I|=k}{\rm rank}(Q^*A_I)$.
By the rank-nullity Theorem, ${\rm dim}(N(Q^{*}A_I))+{\rm dim}(R(Q^{*}A_I))=k$ for $I\subset \{1,\cdots,n\}$ with $|I|=k$. Furthermore, because $N(Q^*) = Y$,
\begin{eqnarray}\label{null-qa}
N(Q^{*}A_I)=\{\vb\in\mathbb{R}^k:A_I\vb\in R(Y)\} 
\end{eqnarray}
and because $r\leq k$, $N(A_I)=\{0\}$ by the RIP condition $0\leq \delta_{2k-r+1}(A)<1$. 
Since $N(A_I)=\{0\}$, we have ${\rm dim}(N(Q^{*}A_I))={\rm dim}(R(Y)\cap R(A_I))$ so that 
\begin{eqnarray}\label{null-qa-dim}
{\rm dim}(N(Q^{*}A_I))={\rm dim}(R(Y)\cap R(A_I))\leq {\dim}(R(Y))=r,
\end{eqnarray}
which also implies that ${\rm rank}(Q^{*}A_I)\geq k-r$ for any $|I|=k$. Hence,  denoting ${\rm supp}(X) = S$, it is enough to show that 
\begin{equation}\label{proxy-s}
{\rm rank}(Q^{*}A_S)=k-r
\end{equation}
and 
\begin{equation}\label{proxy-nots}
{\rm rank}(Q^{*}A_I)>k-r,~{\rm for}~|I|=k~{\rm and}~I\neq S.
\end{equation}
First we will show that \eqref{proxy-s} holds. Because $N(A_S)=\{0\}$ and $Y=A_SX^S$, we have that ${\rm rank}(Y)={\rm rank}(X^S)$ or ${\rm dim}(R(X^S))=r$. Also, since $Y=A_SX^S,$ by \eqref{null-qa}, we have $R(X^S)\subset N(Q^{*}A_S)$.  Hence $\mathrm{dim}(N(Q^*A_S)) \geq r$. On the other hand, by \eqref{null-qa-dim}, the dimension of $N(Q^{*}A_S)$ is at most $r$, which implies 
${\rm dim}(N(Q^{*}A_S))=r$, so that we have ${\rm rank}(Q^{*}A_S)=k-r.$\\
Then, \eqref{proxy-nots} is  the only a remaining part to prove. Suppose that we have an index set $I\subset \{1,\cdots,n\}$ such that $|I|=k$ and ${\rm rank}(Q^{*}A_I)=k-r.$ Then, it
must  hold that 
\begin{equation*}
{\rm dim}(N(Q^{*}A_I))={\rm dim}(R(Y)\cap R(A_I))=r={\rm dim}R(Y).
\end{equation*}
It follows that $R(Y)=R(Y)\cap R(A_I)\subset R(A_I)$. Then, for each column $\yb_i$ of $Y$, $\yb_i\in R(Y)\subset R(A_I)$ so that there exists $\zb_i \in  \mathbb{R}^{|I|}$ 
such that $A_I\zb_i=\yb_i$. Then, $A_I [\zb_1~\cdots~\zb_r]=[\yb_1~\cdots~\yb_r]=Y$ so that there is a $\tilde{X}
\in\mathbb{R}^{n\times r}$ such that $Y=A\tilde{X}$ with ${\rm supp}\tilde{X}\subset I$. 

Since ${\rm rank}(X)=r$ and $(\xb^i)^T= 0$ for any $i\notin S= {\rm supp}X$, it follows that $X^S$ has rank $r$. Hence, because the row rank of a matrix equals its column rank, $X^S$ must have $r$ linearly independent rows. Therefore, there is a subset $Z$ of ${\rm supp}X$ such that $|Z|=r$, and the rows of $X^Z$ are linearly independent.

Since $X^Z\in\mathbb{R}^{r\times r}$, for every $i\in Z$ there is a nonzero vector $\wb_i\in N(X^{Z\setminus\{i\}})$, so that we have $\|X\wb_i\|_0\leq k-r+1$ and $i\in {\rm supp}(X\wb_i)$, since $\xb^i\wb_i\neq 0$ by the linearly independence of the rows of $X^Z$. Since $Y=AX=A\tilde{X}$, we have $A(X\wb_i-\tilde{X}\wb_i)=0$. 

Now, because ${\rm supp}(\tilde{X})\subset I$, we have ${\rm supp}(\tilde{X}\wb_i)\subset {\rm supp}(\tilde{X})\subset I$. Hence $\|\tilde{X}\wb_i\|\leq |I|=k$. It follows that 
$$\|X\wb_i-\tilde{X}\wb_i\|_0\leq k-r+1+|I|=2k-r+1.$$ 
Hence, by the RIP of $A$, we must have $X\wb_i=\tilde{X}\wb_i$. Since ${\rm supp}(\tilde{X}\wb_i)\subset I$, we also have ${\rm supp}(\tilde{X}\wb_i)={\rm supp}(X\wb_i)\subset I$, which implies that $i\in {\rm supp}(X\wb_i)\subset I$. Since $i$ can be any element in ${\rm supp}X$, we have $i\in I$ for any $i\in {\rm supp}X$. It implies that ${\rm supp}X\subset I$ so that $I={\rm supp}X$ since $|I|=|{\rm supp}X|=k.$   Hence, in order to satisfy ${\rm rank}(Q^{*}A_I)=k-r$ and $|I|=k$, we must have $I={\rm supp}X.$
\end{proof}

\subsection{The SPL Penalty }

 Note that minimizing ${\rm rank}\left(Q^*A\Gamma^{\frac{1}{2}} \right)$ with respect to $\gammab$ is equivalent to finding the index set $I$ that minimizes ${\rm rank}\left(Q^*A_I\right)$.  Hence,
Theorem~\ref{thm:spark} implies that minimizing ${\rm rank}(Q^*A\Gamma^{\frac{1}{2}})$ under the constraint $\|\gammab\|_0 \geq k$ will find $\gammab_*$ that has non-zero values for indices  corresponding to ${\rm supp}X_*$, where $\|X_*\|_0=k$ and $Y=AX_*$. This observation leads to  the second term in the SPL penalty  of \eqref{eq:sparse_qrank} as a rank proxy to exploit this  geometric finding. 

Moreover, rather than just using $\log\det(\cdot)$ as in M-SBL,  in this paper, we use more general family of rank proxies that still satisfy our goals. Specifically,   our rank proxy is based on {\em Schatten-$p$ quasi norm} with $0<p\leq 1$ that includes the popular nuclear norm as a special case.   For  a matrix $W \in \Rd^{m\times n}$,  the Scatten $p$-norm proxy for the rank is defined as
\begin{equation}
{\rm Tr}|W|^p= {\rm Tr}\left(\left(WW^* \right)^{\frac{p}{2}}\right)= \sum_{i=1}^{m} \sigma_i^p(W) \ ,
\end{equation}
which  corresponds to the nuclear norm when $p=1$.
Following the derivation that leads to  \eqref{eq:gII},   
we propose the following SPL penalty:
\begin{eqnarray}
g_{SPL}(X) \equiv \min\limits_{\gammab\geq \zerob}  {\cal G}_{SPL}(\gammab,  X)
\end{eqnarray}
where
\begin{equation}\label{eq:Gssl2}
{\cal G}_{SPL}(\gammab,X) =  {\rm Tr}\left(X^{*}\Gamma^{-1}X\right) +  N {\rm Tr} \left( \left(Q^*A\Gamma A^*Q\right)^{\frac{p}{2}}  \right)\  .
\end{equation}
Using the proposed SPL penalty, we  formulate the following noiseless  SPL  minimization problem:
\begin{eqnarray}
 \min_X g_{SPL}(X), & \mbox{subject to $Y=AX$ }\label{eq:noiselessSPL}  \  .
\end{eqnarray}
Note that ${\rm Tr} \left(\left(Q^*A\Gamma A^*Q\right)^{\frac{p}{2}}\right)$ is a concave function with respect to its singular values,  so we can find its  convex conjugate:
\begin{eqnarray}
{\cal G}_{SPL}(\gammab,X) \equiv \min\limits_{\Psi \in \Sb_{0+}}  {\mathfrak G }_{SPL}(\gammab, X, \Psi)
\end{eqnarray}
 where ${\mathfrak G}_{SPL}(\gammab, X, \Psi)$ is given  as
\begin{eqnarray}\label{eq:gSPL}
{\mathfrak G}_{SPL}(\gammab, X,\Psi) \equiv  {\rm Tr}\left(X^*\Gamma^{-1}X\right) 
+   Np \left( {\rm Tr}\left( (Q^*A\Gamma A^*Q)\Psi \right) -  \frac{2}{q} {\rm Tr} (\Psi^{\frac{q}{2}})  \right)  \ . 
\end{eqnarray}
 for $q$ such that $1/(p/2)+1/(q/2)=1$; and  $\Sb_{0+}$  denotes the set of symmetric positive semi-definite matrices:
$$\Sb_{0+} = \{ X \in \Sb: X \succeq 0 \}.$$
The relationship between \eqref{eq:gSPL} and \eqref{eq:Gssl2} can be clearly understood by minimizing \eqref{eq:gSPL} with respect to $\Psi$. Indeed,  using $\partial {\rm Tr}(A\Psi)/\partial \Psi = A^*$ and $\partial {\rm Tr}(\Psi)^{q/2}\partial \Psi =  \Psi^{q/2-1}$ \cite{petersen2008matrix},  we have 
\begin{eqnarray}\label{eq:psi}
\Psi= (Q^*A\Gamma A^*Q)^{\frac{1}{q/2-1}}
\end{eqnarray}
 and 
$$\min_{\Psi\in \Sb_{0+}} {\rm Tr}\left( (Q^*A\Gamma A^*Q)\Psi \right) -  \frac{2}{q} {\rm Tr} (\Psi^{\frac{q}{2}}) = \frac{1}{p} {\rm Tr} \left((Q^*A\Gamma A^*Q)^{\frac{p}{2}}\right)
$$ 
Here, $(Q^*A\Gamma A^*Q)^{\frac{1}{q/2-1}}$ should be understood as applying the power operation to the non-zero singular values of $Q^*A\Gamma A^*Q$ while retaining  zero singular values at zero.

Notice that $ {\mathfrak G}_{SPL}(\gammab,X, \Psi) $ is a surrogate function that majorizes ${\cal G}_{SPL}(X,\gammab) $.
Although like \eqref{eq:Gssl2},  \eqref{eq:gSPL} is not jointly convex with   respect to the different variables, the reason to prefer \eqref{eq:gSPL} over \eqref{eq:Gssl2} is that  \eqref{eq:gSPL} is convex with respect to each of  the variables  $\gammab, X$, and $\Psi$ with the other held constant, and we can obtain a closed-form expression in each step of alternating minimization.
Specifically, recall that the SPL penalty is given by
\begin{eqnarray}\label{eq:gSPL0}
 g_{SPL}(X) 
&=& 
\min_{\gammab \geq \zerob,\Psi \in \Sb_{0+} }{\mathfrak G}_{SPL}(\gammab, X,\Psi) 
\end{eqnarray}
where 
$${\mathfrak G}_{SPL}(\gammab, X,\Psi) = {\rm Tr}\left(X^*\Gamma^{-1}X\right) 
+   Np \left( {\rm Tr}\left( (Q^*A\Gamma A^*Q)\Psi \right) -  \frac{2}{q} {\rm Tr} (\Psi^{\frac{q}{2}})  \right)  .$$
Let $S$ denotes the non-zero support set of $X$. Using the KKT condition with respect to $\gammab$, we have
\begin{eqnarray} 
\frac{\partial{\mathfrak G}_{SPL}(\gammab, X,\Psi) }{\partial \gamma_i}  - \mu_i &=&   -\frac{\|\xb^i\|^2}{\gamma_i^2} + N \ab_i^*Q \Psi Q^*\ab_i 
-\mu_i= 0,\quad \forall ~i  \\
\mu_i \gamma_i &=&0, \quad 
\mu_i \geq  0, \quad
\gamma_i \geq 0, \quad \forall ~i  
\end{eqnarray}
which leads  to $\gamma_i =0$ for $i \notin S$, whereas for $i \in S$ 
$$\gamma_i  \frac{\partial {\mathfrak G}_{SPL}}{\partial \gamma_i}  = 0 = -  \frac{\|\xb^i\|^2}{\gamma_i}+  N \gamma_i \ab_i^*Q(Q^*A \Gamma A^*Q)^{\frac{1}{q/2-1}} Q^*\ab_i \  .$$
Hence, 
${\rm Tr}\left(X^*\Gamma^{-1}X\right)  = N {\rm Tr} \left( (Q^*A \Gamma A^*Q)^{1+\frac{1}{q/2-1}} \right)= N {\rm Tr} \left( (Q^*A \Gamma A^*Q)^{\frac{p}{2}} \right)$ and we have
$$ g_{SPL}(X)= 2N {\rm Tr} \left(\left( Q^*A \Gamma A^*Q \right)^{\frac{p}{2}})\right) = 2N {\rm Tr} |Q^*A \Gamma^{\frac{1}{2}} |^{p} .$$
This implies that at the KKT point, the SPL penalty has  cost function values  equivalent to the Schatten-$p$ quasi-norm rank penalty for $Q^*A \Gamma^{\frac{1}{2}}$.

\section{The SPL Algorithm}

\subsection{Alternating Minimization Algorithm}
So far, we have analyzed the global minimizer for the noiseless SPL algorithm.  For noisy measurement, we propose the following cost function:
\begin{eqnarray}
 \min_X  \|Y-AX\|_F^2+ \lambda g_{SPL}(X)  \  . \label{eq:noisySPL}
\end{eqnarray}
By letting 
$\lambda\rightarrow 0$, the solution of \eqref{eq:noisySPL} becomes  a solution of \eqref{eq:noiselessSPL} when ${\rm rank}(A\Gamma^{1/2})=m$ since then the constraint is automatically satisfied as follows:
$$\lim_{\lambda \rightarrow 0} A X(\lambda) = \lim_{\lambda \rightarrow 0} A  \Gamma A^*(\lambda I+A \Gamma A^* )^{-1} Y   = Y$$
Similar equivalence can be hold for ${\rm rank}(A\Gamma^{1/2})<m$  if $Y \in R(A\Gamma^{1/2})$.
Therefore, rather than dealing with Eqs.~\eqref{eq:noiselessSPL}  and \eqref{eq:noisySPL} separately,  we use \eqref{eq:noisySPL} and the limiting argument
to discuss a noiseless SPL optimization problem.

Then, using \eqref{eq:gSPL0},  a noisy SPL formulation can be written as
\begin{eqnarray}\label{eq:SPLform}
 \min\limits_{X, \gammab\geq \zerob,\Psi\in \Sb_{0+}}   &&C (X,\gammab,\Psi) 
 \end{eqnarray}
 where the augmented cost function is given by
 \begin{eqnarray}\label{eq:augcost}
 C(X,\gammab,\Psi) &=&  \|Y-AX\|_F^2+ \lambda \left\{{\rm Tr}\left(X^*\Gamma^{-1}X\right) 
+   N p \left[{\rm Tr}\left( (Q^*A\Gamma A^*Q)\Psi \right) -   \frac{2}{q} {\rm Tr} (\Psi^{\frac{q}{2}}) \right]\right\}  \ .
\end{eqnarray}
While $C(X,\gammab,\Psi)$ is not convex for all these variables simultaneously due to the presence of the bi-convex terms  ${\rm Tr}\left(X^*\Gamma^{-1}X\right)$ and ${\rm Tr}\left( (Q^*A\Gamma A^*Q)\Psi \right)$,    it is convex with respect to each variable $X, \gammab$ and $\Psi$ separately. Indeed, this is a typical example of the d.c. algorithm (DCA) for the difference of convex functions programming \cite{tao1997convex,tao1998dc}, and the alternating minimization algorithm converges to a {\em local} minimizer or a critical point.

Specifically,   a critical solution should satisfy the following first order Karush-Kuhn-Tucker (KKT) necessary conditions \cite{ChZa96}:
\begin{eqnarray} 
\frac{\partial  C(X,\gammab,\Psi)}{\partial X} &=& -2A^*(Y-AX)+ \lambda\Gamma^{-1}X = \zerob \label{eq:xiter}\\
\frac{\partial  C(X,\gammab,\Psi)}{\partial \Psi} &=& Np (Q^*A\Gamma A^*Q) - Np\Psi^{q/2-1}= 0 \label{eq:psiiter} \\
\frac{\partial C(X,\gammab,\Psi)}{\partial \gamma_i}  - \mu_i &=&   -\frac{ \sum_j |x_{ij}|^2}{\gamma_i^2} + N \ab_i^*Q \Psi Q^*\ab_i 
-\mu_i= 0,\quad \forall ~i  \label{eq;gammaiter}\\
\mu_i \gamma_i &=&0, \quad 
\mu_i \geq  0, \quad
\gamma_i \geq 0, \quad \forall ~i \label{eq:lambda}
\end{eqnarray}
This leads us to the following fixed point iterations:

\subsubsection{Minimization with respect to $X$}

For a given estimate $\gammab^{(t)}$,   \eqref{eq:xiter} yields a closed form solution for $X^{(t+1)}$:
$$X^{(t+1)}=\Gamma^{(t)} A^*(\lambda I+A\Gamma^{(t)}A^*)^{-1}Y,\quad \Gamma^{(t)}=\textrm{diag}(\gammab^{(t)}).$$

\subsubsection{Determination of $\Psi$}

For a given estimate $\gammab^{(t)}$,  using \eqref{eq:psiiter},  we can  find a closed-form solution for $\Psi^{(t)}$: i.e.
$\Psi^{(t)}= (Q^*A\Gamma^{(t)} A^*Q)^{\frac{1}{q/2-1}} \  .$

\subsubsection{Estimation of $\gammab$}

For a given $\Psi^{(t)}$ and $X^{(t)}$,  using Eqs.~\eqref{eq;gammaiter} and \eqref{eq:lambda}, we have
\begin{eqnarray}\label{eq:kkt_g}
\gamma_i\frac{\partial C(X^{(t)},\gammab,\Psi^{(t)})}{\partial \gamma_i} = \gamma_i\left(-\frac{  \| \xb^{(t)i} \|^2}{\gamma_i^2} + N \ab_i^*Q \Psi^{(t)}Q^*\ab_i \right)= 0\ .
\end{eqnarray}
Here, if  $\ab_i^*Q \Psi^{(t)}Q^*\ab_i \neq 0$, we have
the following update equation:
\begin{eqnarray}\label{eq:gamma_ssl}
\gamma_i^{(t)} = \left(\frac{\frac{1}{N} \| \xb^{(t)i} \|^2}{\ab_i^*Q\Psi^{(t)}Q^*\ab_i}\right)^{\frac{1}{2}}  = \left(\frac{\frac{1}{N} \| \xb^{(t)i} \|^2}{\ab_i^*Q(Q^*A\Gamma^{(t)}A^*Q)^{\frac{1}{q/2-1}}Q^*\ab_i}\right)^{\frac{1}{2}}\  .
\end{eqnarray}

Note that the SPL updates appears similar to those of M-SBL except the $\gammab$ update by \eqref{eq:gamma_ssl}, which is now modified based on subspace geometry. This is the main ingredient for the performance improvement of SPL over M-SBL. In the following, we further discuss  several important properties of the SPL penalty.

\subsection{Properties  of  the SPL Penalty}

An interesting case occurs when  $\ab_i^*Q \Psi^{(t)}Q^*\ab_i = 0$.
   In this case,  based on \eqref{eq:kkt_g}, we have the following two observations: 1) $\gamma_i^{(t)}\rightarrow \infty$ when $ \| \xb^{(t)i} \|^2\neq 0$; and 2)  $\gamma_i^{(t)}$ can be an arbitrary positive number $C$ when  $ \| \xb^{(t)i} \|^2= 0$ since  the equality in \eqref{eq:kkt_g}  is satisfied regardless of the choice of $C$. Therefore, 
we  define  the following $\gamma_i$ update\footnote{In a practical implementation, a tolerance around 0 and finites values for $\gamma_i^{(t)}$ have to be used.}:
\begin{eqnarray}\label{eq:gammaupdate}
\gamma_i^{(t)} = \left\{ \begin{array}{ll} 
\infty ,& \mbox{if  $\ab_i^*Q \Psi^{(t)}Q^*\ab_i =0$ and $ \| \xb^{(t)i} \|^2\neq 0$} \\
C\gg 0,& \mbox{if  $\ab_i^*Q \Psi^{(t)}Q^*\ab_i =0$ and $ \| \xb^{(t)i} \|^2= 0$} \\
\left(\frac{\frac{1}{N} \| \xb^{(t)i} \|^2}{\ab_i^*Q\Psi^{(t)}Q^*\ab_i}\right)^{\frac{1}{2}}
& \mbox{if  $\ab_i^*Q \Psi^{(t)}Q^*\ab_i \neq 0$ } 
 \end{array} \right.
\end{eqnarray}
Thanks to \eqref{eq:gammaupdate},  even if $ \| \xb^{(t)i} \|^2$ becomes erroneously zero during the iterations, there is a possibility, when $ \Psi^{(t)}Q^* \ab_i =0$, for $\gamma_i^{(t)}$ to become nonzero; hence, the corresponding row of $X^{(t)}$ can become nonzero once
$\gamma_i$ turns into nonzero.
%
%
 Note that this is very different from M-SBL, since in \eqref{eq:msbl_gamma} the denominator term cannot be zero even under the most relaxed RIP constraint $\delta_{k+1}^L<1$, so the condition
$  \| \xb^{(t)i} \|^2=0$ will set  the corresponding $\gamma_i^{(k)}$ to zero. Therefore,  in M-SBL, once a row of $X^{(t)}$ is set to zero in error, it will stay  zero for all subsequent iterations and the algorithm is unable to recover from this error.

Second, it is important note that since $1/(p/2)+1/(q/2)=1$, we have $\lim_{q\rightarrow 0 } p = \lim_{q\rightarrow 0 }\frac{2}{1-2/q} =0$; so
\begin{eqnarray}
\lim_{q\rightarrow 0} g_{SPL}(X)=  \lim_{q\rightarrow 0} 2N {\rm Tr} |Q^*A \Gamma^{\frac{1}{2}} |^{p}  = 2N {\rm Rank}(Q^*A_I),
\end{eqnarray}
where $I$ denotes the index set of non-zero diagonal elements of $\Gamma$. Hence, in this case, the SPL algorithm with $p\rightarrow 0$ is the algorithm that directly minimizes the rank
of $Q^*A_I$. In this case, the corresponding update rule is given by
\begin{eqnarray}
X^{(t+1)}=\Gamma^{(t)} A^*(\lambda I+A\Gamma^{(t)}A^*)^{-1}Y &,&
\gamma_i^{(t)} = \left(\frac{\frac{1}{N} \| \xb^{(t)i} \|^2}{\ab_i^*Q(Q^*A\Gamma^{(t)}A^*Q)^{-1}Q^*\ab_i}\right)^{\frac{1}{2}}  \label{eq:spl_gamma}\ .
\end{eqnarray}
The main technical challenge is, however, that for $p=q=0$ the cost function \eqref{eq:augcost} is not well-defined. Therefore, the aforementioned interpretation of the SPL should be understood as an asymptotic result such that $p$ and $q$ approach  zero, but are not exactly zero.

Next, as a by product of Theorem~\ref{thm:spark}, the SPL algorithm is computationally more efficient than M-SBL. Note that  the computational bottleneck of M-SBL (or SPL) is due to the 
the inversion of $A\Gamma^{(t)}A^*$  (or $Q^*A\Gamma^{(t)}A^*Q$, respectively). Specifically, unlike the $X^{(t)}$ update step that can be done using the conjugate gradient (CG) algorithm,  the matrix inversion cannot be performed using CG and usually is performed using the singular value decomposition (SVD).  Now, note that the size of  matrix $Q^*A\Gamma^{(t)}A^*Q$ in SPL is $(m-r) \times (m-r)$ compared to $m \times m$ for $A\Gamma^{(t)}A^*$, which reduces the cost of matrix inversion for SPL  compared to M-SBL. In particular, for the case of MUSIC where 
$m=k+1$ and $r=k$,   matrix inversion is not necessary for SPL whereas M-SBL still requires the $m\times m$ matrix inversion.

Finally, note that the hyper-parameter $\gammab$ is closely related to  spectral estimation. For example,   for the case of MUSIC where 
$m=k+1$ and $r=k$, the term $(Q^*A\Gamma^{(t)}A^*Q)$ in \eqref{eq:spl_gamma} reduces a  scalar and we have
\begin{eqnarray}
\gamma_i^{(t)}  &=& \left(\frac{\frac{1}{N} \| \xb^{(t)i} \|^2}{\ab_i^*Q(Q^*A\Gamma^{(t)}A^*Q)^{-1}Q^*\ab_i}\right)^{\frac{1}{2}} \\
&=&  \frac{1}{\sqrt{\ab_i^*Q Q^*\ab_i} } \times \frac{  \| \xb^{(t)i} \|}{\sqrt{N (Q^*A\Gamma^{(t)}A^*Q)}}
\end{eqnarray}
where the first term is the MUSIC spectrum and the second term is related to the magnitude of the $i$-th row of $X^{(t)}$.
Hence,  in the case of full-row rank $X$ (i.e., the MUSIC case),  SPL can be regarded as an algorithm that initialises he non-zero support estimation using a spectral estimation technique, followed by alternating modification using the data fidelity matching criterion.

\section{NUMERICAL RESULTS}
\label{sec:results}

In this section, we perform extensive numerical experiments to validate the proposed algorithm  under various experimental conditions, and compare it with respect to  existing joint sparse recovery algorithms.
In particular, we are interested in the SPL algorithm in the asymptotic region of $p \rightarrow 0$ since it directly minimises the rank of $Q^*A_I$.

%
The elements of a sensing matrix $A$ were generated  from a Gaussian distribution with zero mean and variance of $1/m$, and then each column of $A$ was normalized to have an unit norm.  An unknown signal $X_*$ with ${\rm rank}(X_*)=r\leq k$ was generated using the same procedure as in \cite{Lee2010SAMUSIC}. Specifically, we  randomly generated a support $I$, and then the corresponding nonzero signal components were obtained by
\begin{equation}\label{eq:source}
X_*^I=U\Sigma V \ ,
\end{equation} where $U \in \Re^{k\times r}$  was set to random orthonormal columns, and $\Sigma={\rm diag}([\sigma_i]_{i=1}^r)$  is a diagonal matrix whose $i$-th element is given by
\begin{eqnarray}\label{eq:tau}
\sigma_i = \tau^i,\quad  0<\tau <1,
\end{eqnarray}
 and $V \in \Re^{r\times N}$ was generated using Gaussian random distribution with zero mean and variance of $1/N$. After generating noiseless data,  we added zero mean white Gaussian noise. We declared success if an estimated support from a certain algorithm was the same as a true ${\rm supp}X$.
 
As the proposed algorithm does not require a prior knowledge of the sparsity level, we need to define a stoping criterion.  Here, the stopping criterion is defined by monitoring the normalized change in the variable $\gammab$:
$$ \frac{\| \gammab^{(t)} - \gammab^{(t-1)}\|_2}{\|\gammab^{(t)}\|_2} < 10^{-3}\ .$$
From our experiments, usually 20-30 iterations are required for SPL to converge. 

\subsection{Local Minima Property}

We first perform experiments to confirm that SPL produces the true solution under milder conditions than M-SBL. To show this,  using $k$-sparse signal $X_*$ generated by \eqref{eq:source} with $\tau=1$, we  produced measurements $Y=AX_*$ such that $r:={\rm rank}(Y)\leq k$.   Then,  we initialized both algorithms with  $X^{(0)}$ that satisfies the following:
 \begin{eqnarray}\label{eq:local_sim}
 Y=AX^{(0)},\quad   \|X^{(0)}\|_0=m, \quad |{\rm supp} X_* \cap  {\rm supp} X^{(0)}| = s \ , 
 \end{eqnarray}
 where $s<k-r$, $s=k-r$ and $s>k-r$, respectively. Note that the initialization corresponds to a local minimiser and we are interested in confirming that SPL can escape from the local minimizers thanks to the update in \eqref{eq:gammaupdate}.  Recall that it is difficult for M-SBL to avoid this type of local minimizers since $X^{(0)}$ has zeros rows at 
the $i$-th row where $i \in S\setminus  {\rm supp} X^{(0)}$and the M-SBL update rule  in \eqref{eq:msbl_gamma} cannot make  the corresponding $\gamma_i$ nonzero in the subsequent iterations.
 
   Figs.~\ref{fig:localminimizer}(a)-(c) illustrate the perfect recovery ratio from the initialization using SPL and M-SBL at various SNR conditions for (a)  $s=k-r=3$, (b) $s=2< k-r$, and (c) $s=5>k-r$, respectively. 
   The results clearly demonstrate that SPL  finds the global minimizer nearly perfectly, whereas M-SBL fails most of the time. This clearly confirms the theoretical advantages   for SPL.

\begin{figure}[htbp]
\centering{\includegraphics[width=5cm]{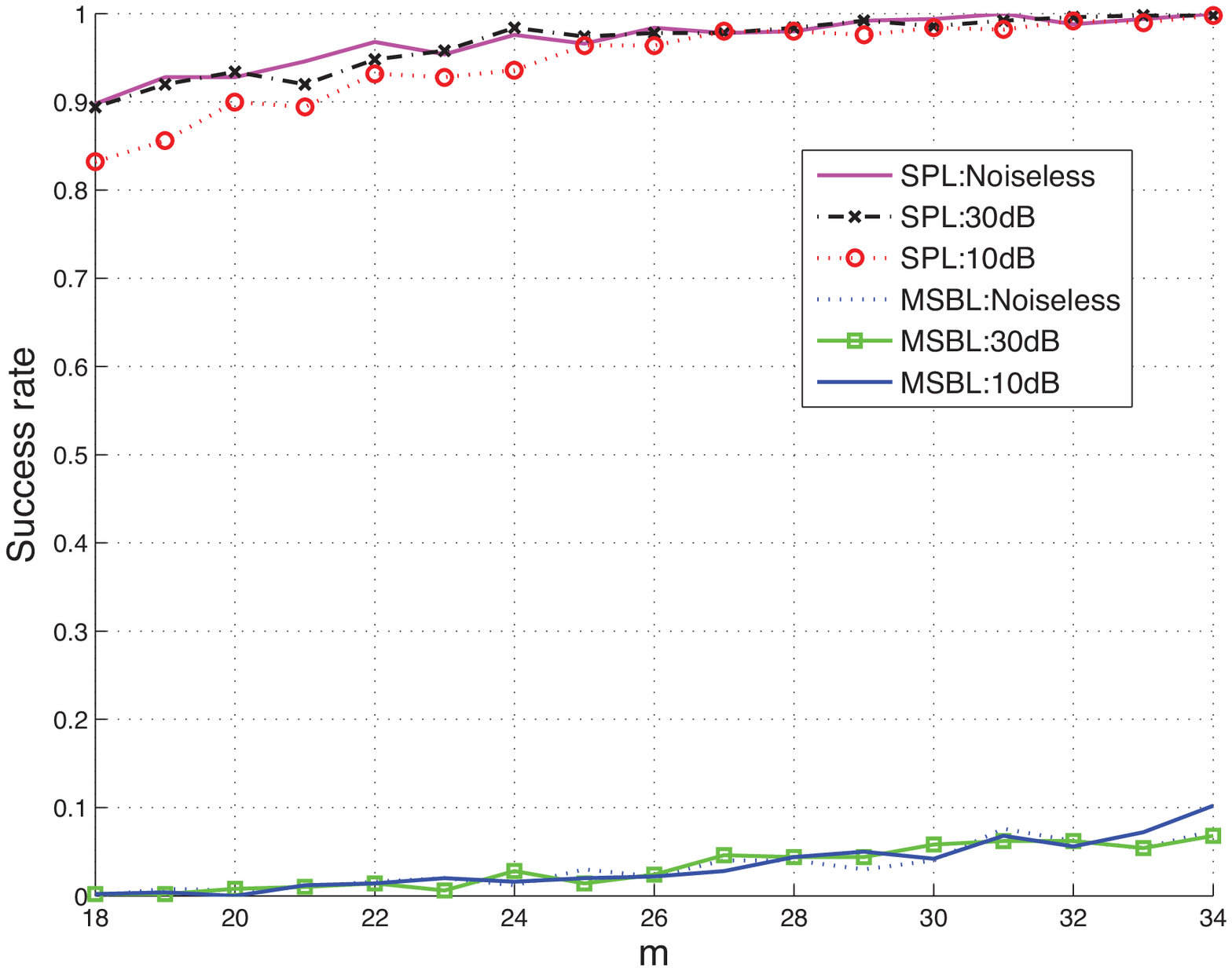}\includegraphics[width=5cm]{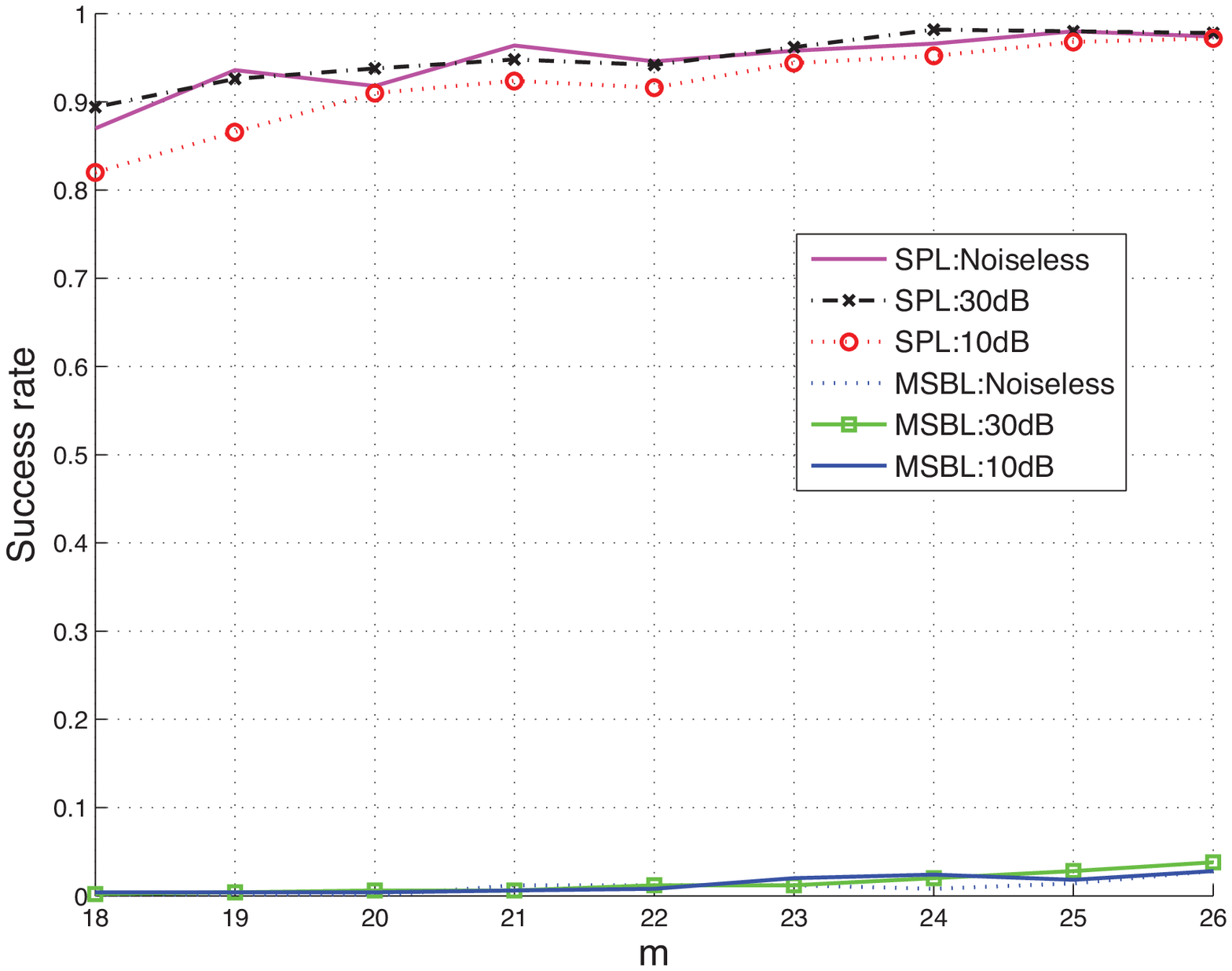}\includegraphics[width=5cm]{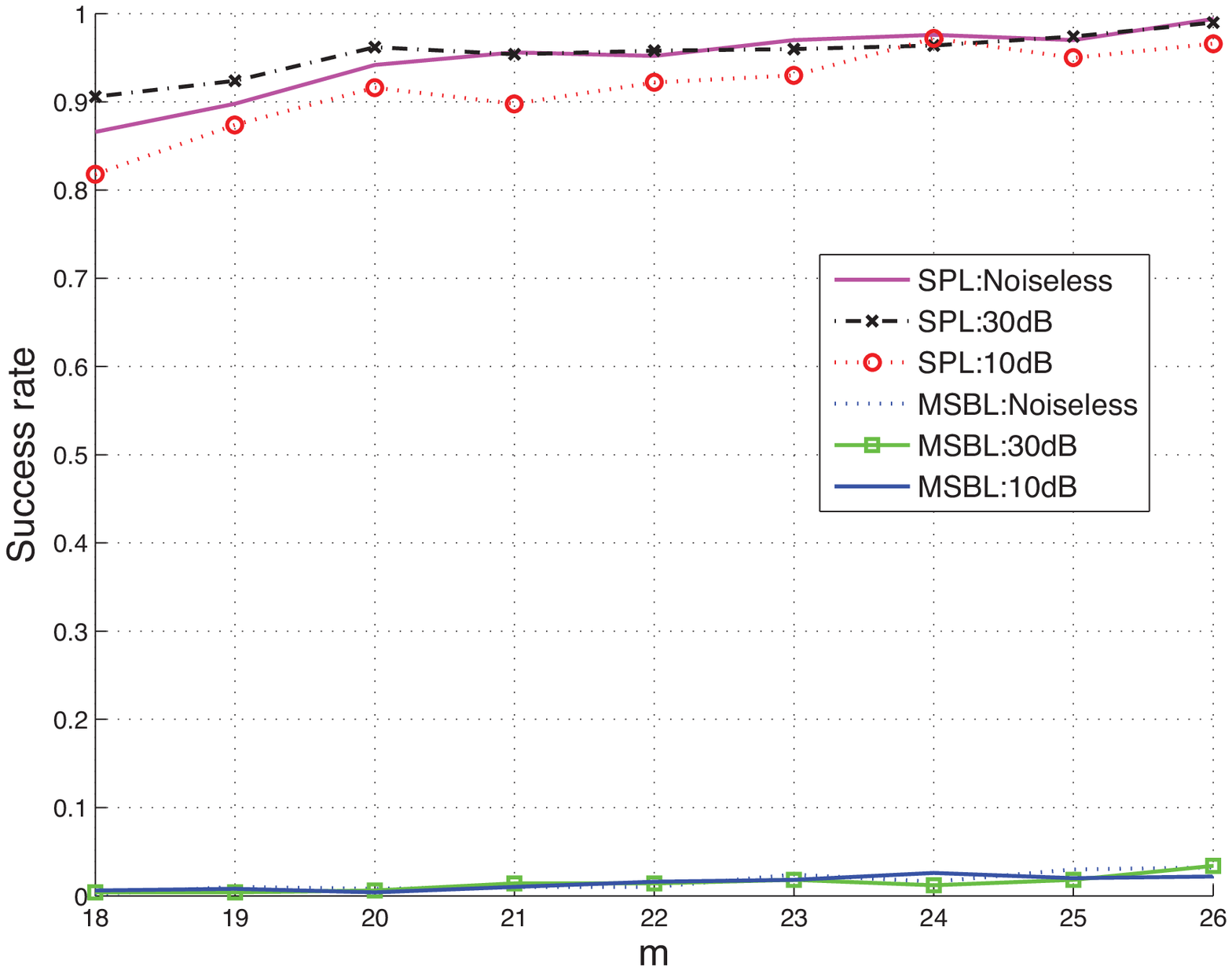}}\\
\centering{\mbox{(a)}\hspace{5cm}\mbox{(b)}\hspace{5cm}\mbox{(c)}}
\caption{Perfect recovery ratio from initialization using local minimizer that satisfy \eqref{eq:local_sim}. The results are averaged after 500 runs and the simulation parameters are: $k=10,r=7, n=128$ and $N=64$. (a)  $s=k-r=3$, (b) $s=2< k-r$, and (c) $s=5>k-r$, respectively.}
\label{fig:localminimizer}
\end{figure}

\subsection{Comparison with Other State-of-Art Algorithms}

\begin{figure}[hbpt]
\centering{\includegraphics[width=6cm]{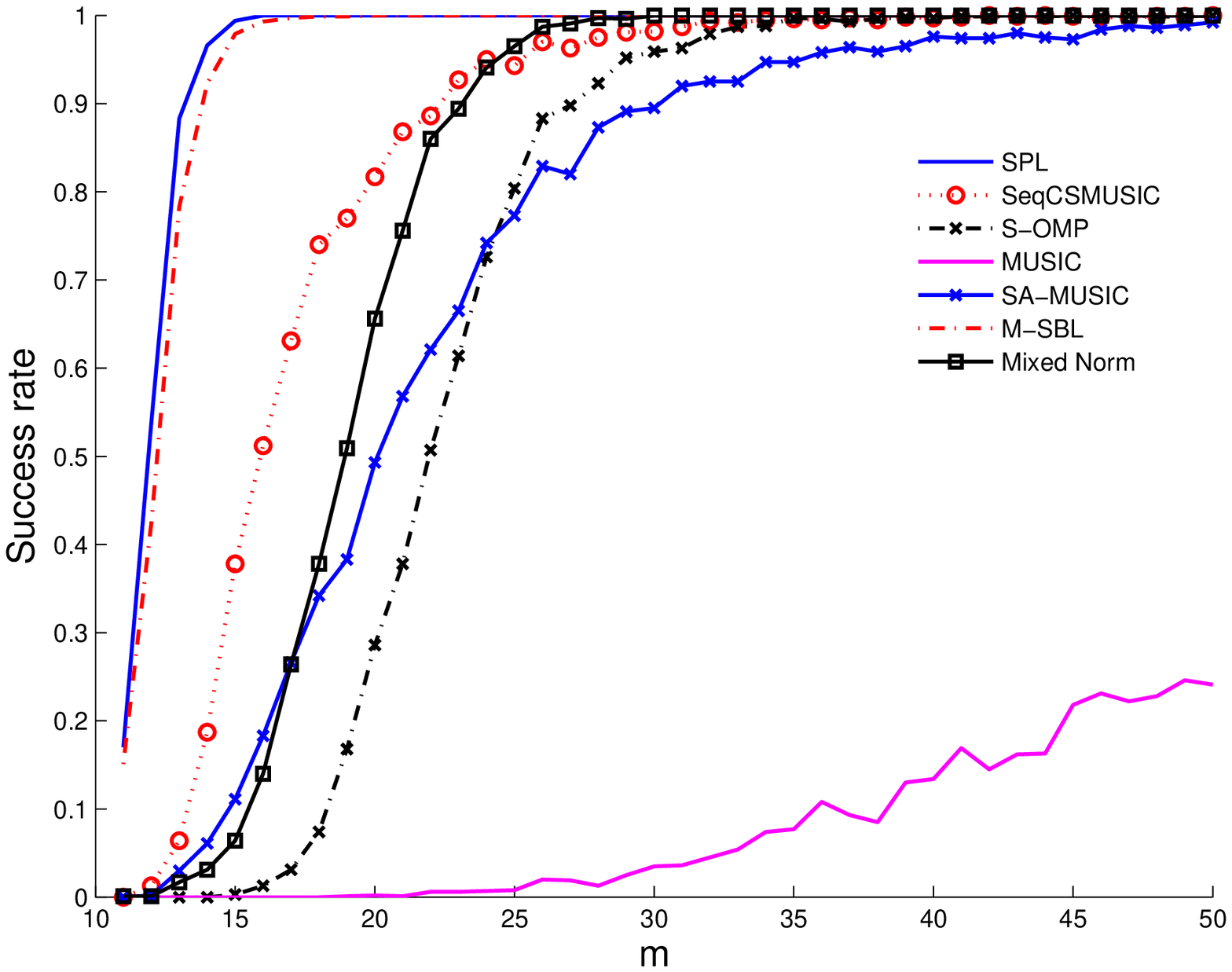}\includegraphics[width=6cm]{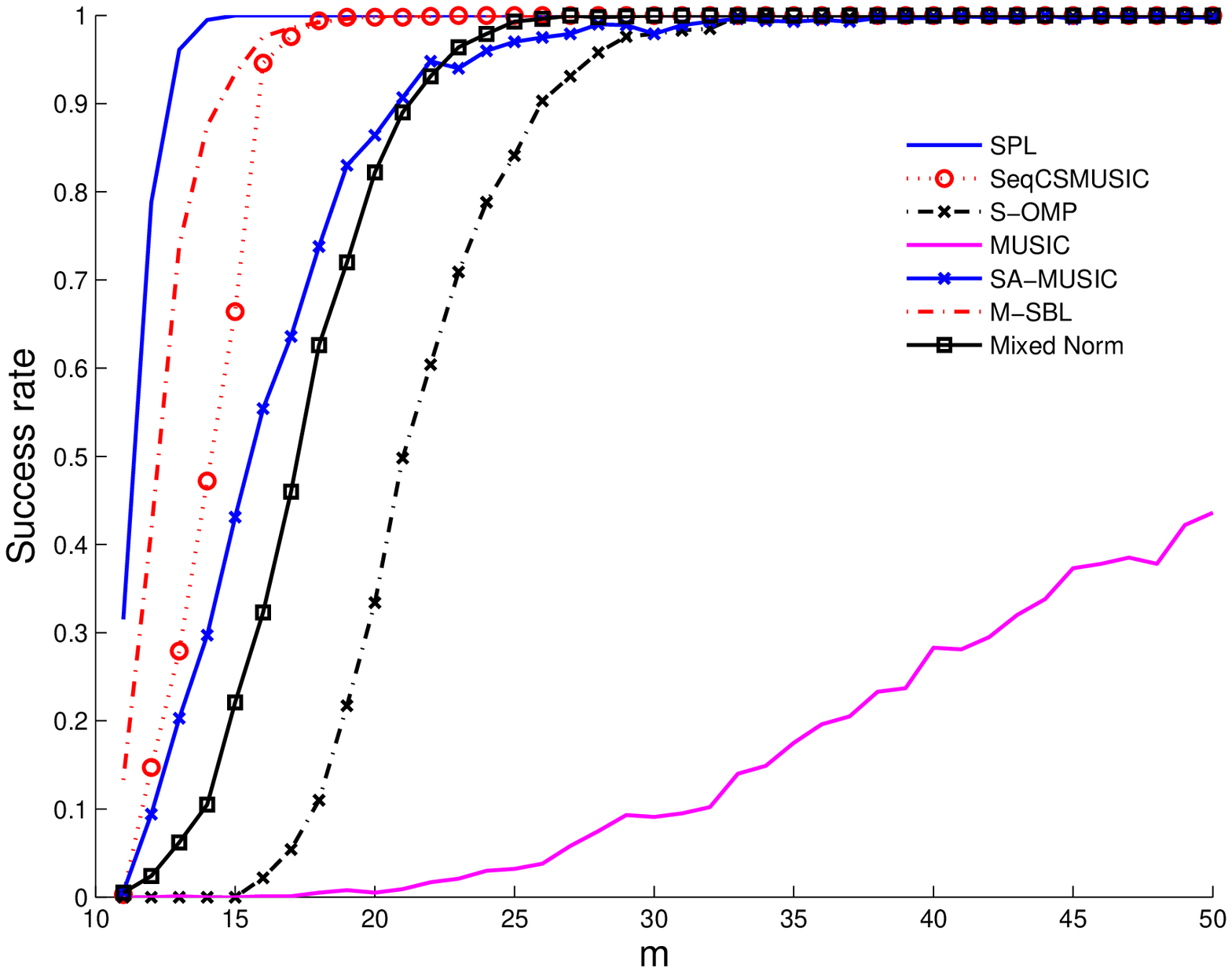}}\\
\centering{\mbox{(a)}\hspace{6cm}\mbox{(b)}}\vspace*{0.5cm}
\centering{\includegraphics[width=6cm]{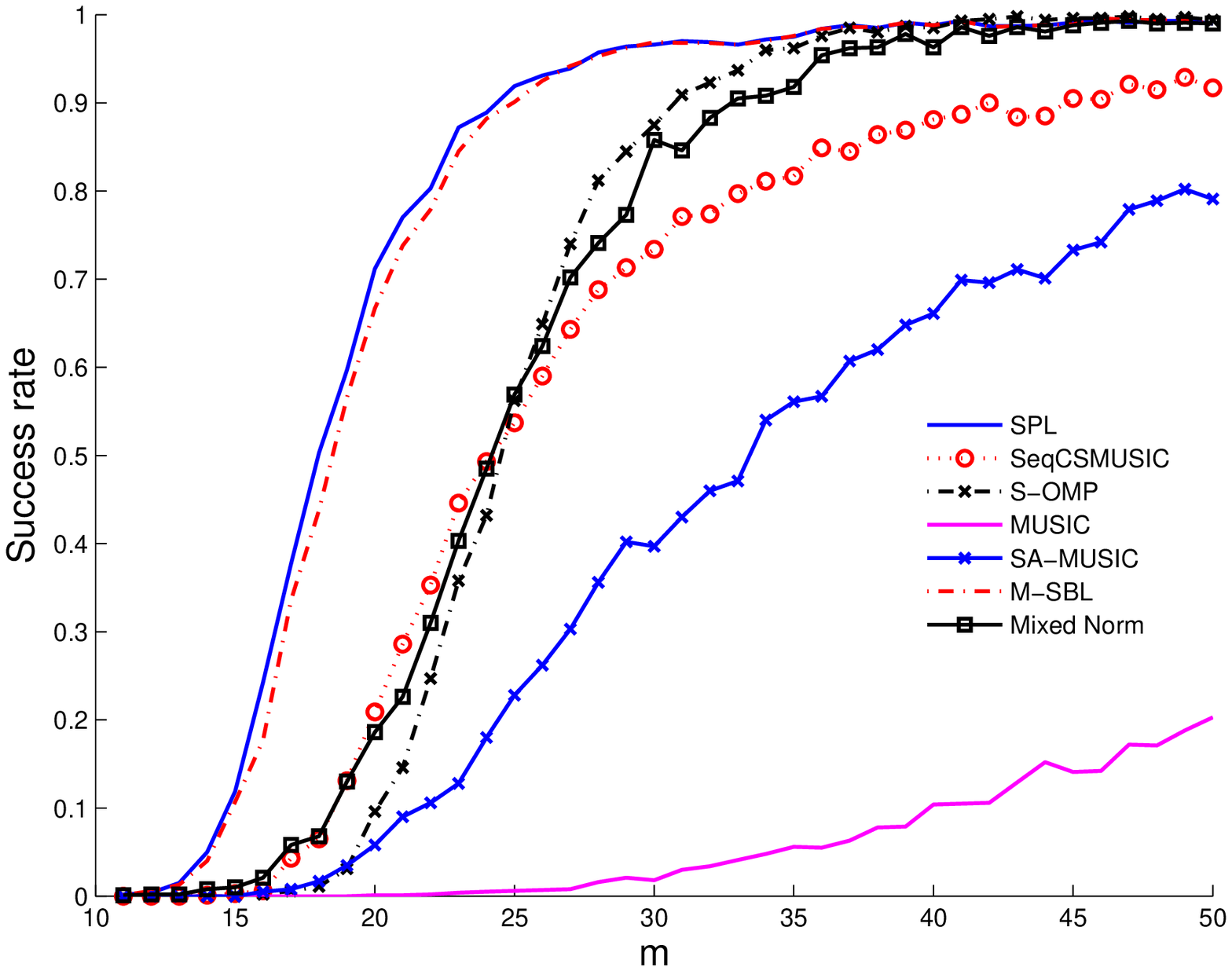}\includegraphics[width=6cm]{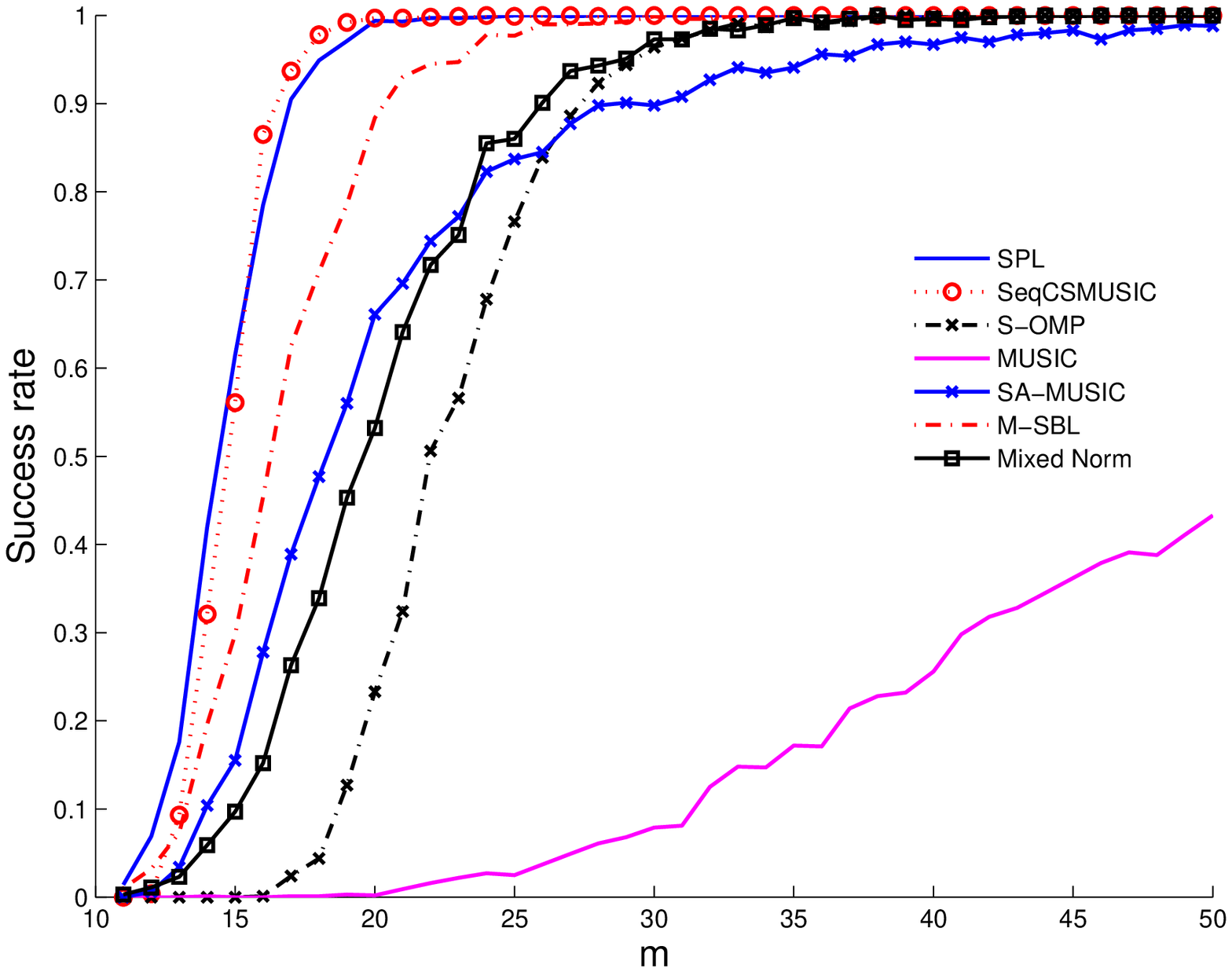}}\\
\centering{\mbox{(c)}\hspace{6cm}\mbox{(d)}}\vspace*{0.5cm}
\centering{\includegraphics[width=6cm]{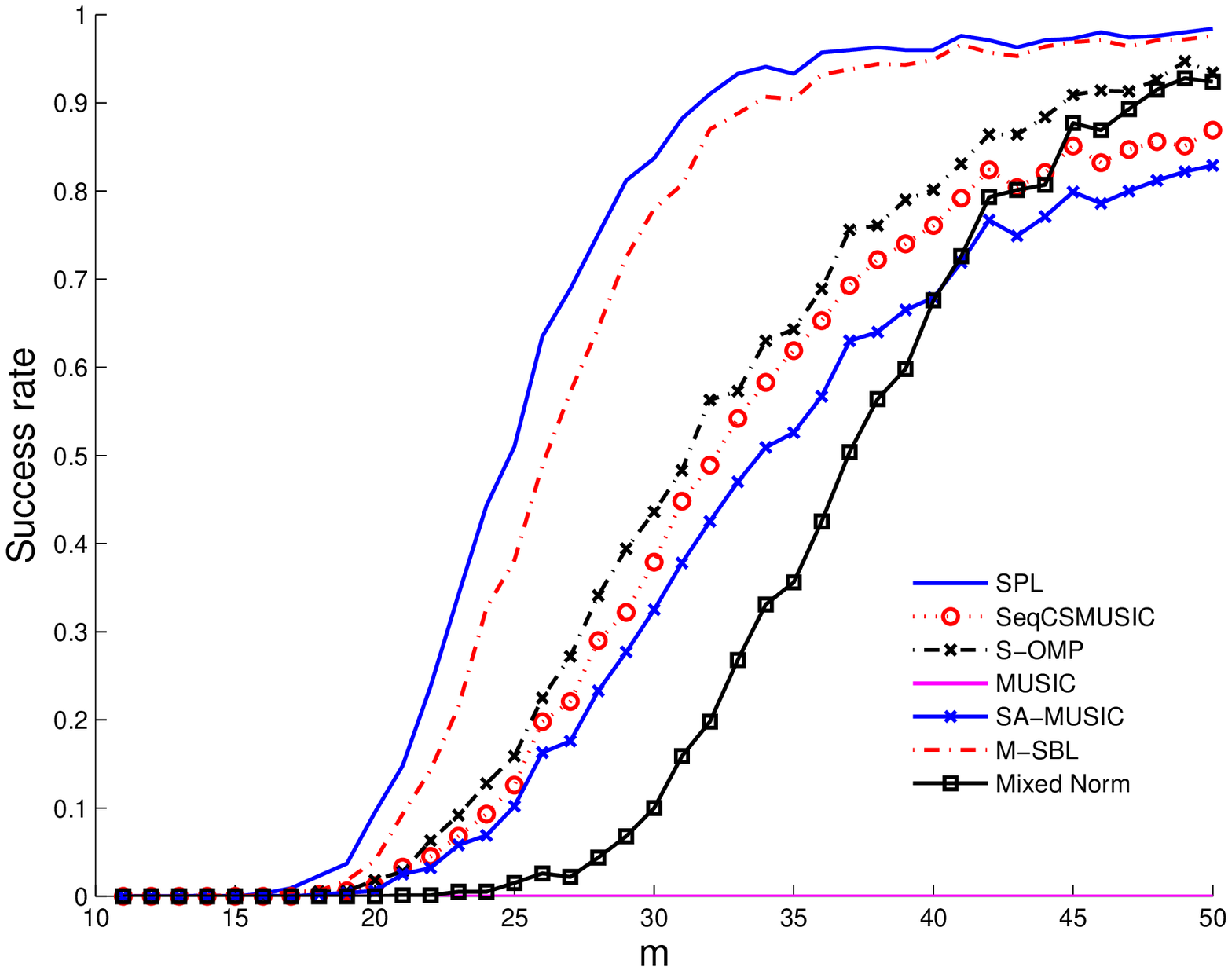}\includegraphics[width=6cm]{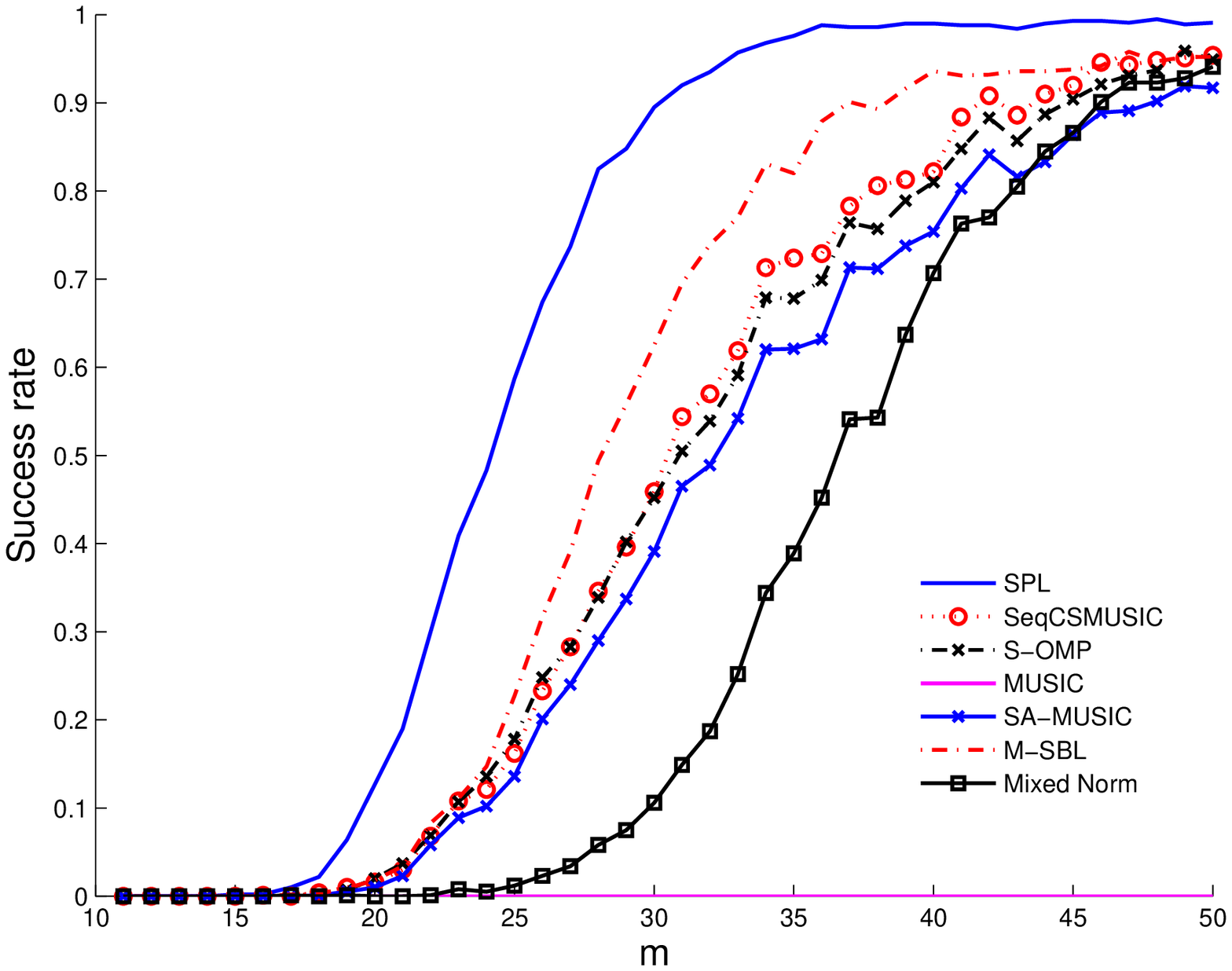}}\\
\centering{\mbox{(e)}\hspace{6cm}\mbox{(f)}}
\caption{Performance of various joint sparse recovery algorithms at $n=128, k=10, r=6$ when
(a) $SNR=30dB, N=16, \tau=1$, (b) $SNR=30dB, N=256, \tau=1$, 
(c) $SNR=10dB, N=16, \tau=1$, (d) $SNR=10dB, N=256, \tau=1$, (e) $SNR=30dB, N=16, \tau=0.1$, (f) $SNR=30dB, N=256, \tau=0.1$, respectively.}
\label{fig:variousMMV}
\end{figure}

To compare the proposed algorithm with various state-of-art joint sparse recovery methods,  the recovery rates of various state-of-art joint sparse recovery algorithms such as   MUSIC, S-OMP, SA-MUSIC,   sequential CS-MUSIC,   M-SBL, and  the $l_1/l_2$ mixed norm approach 
are plotted in Fig.~\ref{fig:variousMMV} along with  those of  SPL.  
Among the various implementation of mixed norm approaches, we used high performance  SGPL1 software \cite{van2008probing}, which can be downloaded from  ${\rm http://www.cs.ubc.ca/labs/scl/spgl1/}$. 
Since M-SBL, the mixed norm approach,  as well as SPL do not provide an exact $k$-sparse solution, we used the support for the largest $k$ coefficients as a  support estimate in calculating the perfect recovery ratio. For MUSIC, S-OMP, SA-MUSIC,   sequential CS-MUSIC, we assume that $k$ is known.
For subspace based algorithms such as MUSIC, SA-MUSIC, sequential CS-MUSIC as well as SPL, we determine the signal subspace using the following criterion
$$ \max_{i\in \{1,\cdots,m\}} \frac{\sigma_i -\sigma_{i+1}}{\sigma_{i}-\sigma_m} > 0.1,$$
where $\sigma_1\geq \sigma_2 \geq \cdots \geq \sigma_m$ denotes the singular values of $YY^*$.  A theoretical motivation for such subspace determination is given in \cite{Lee2010SAMUSIC}.
Here, the success rates were averaged over $1000$ experiments. 
The simulation parameters were as follows: $m\in\{1,2,\ldots,50\}$, $n=128, k=8, r=5$, $SNR=30dB, 10dB$, and $N\in\{32,128\}$, respectively. 
Figs.~\ref{fig:variousMMV}(a)-(d) illustrates the comparison results under various snapshot number and SNR conditions.
Note that SPL  consistently outperforms all other algorithms at various  snapshots numbers. In particular,  the gain increases with increasing number of snapshots,  since it provides better subspace estimation.  Also, note that  SPL  consistently outperforms  M-SBL at all SNR ranges.  
Figs.~\ref{fig:variousMMV}(e)(f) illustrates that SPL significantly outperforms M-SBL when $X$ is badly conditioned. Moreover, as the subspace estimation becomes accurate with increasing $N$, the performance gain becomes more significant. 


\begin{figure}[!tbp]
\centering{\includegraphics[width=6cm]{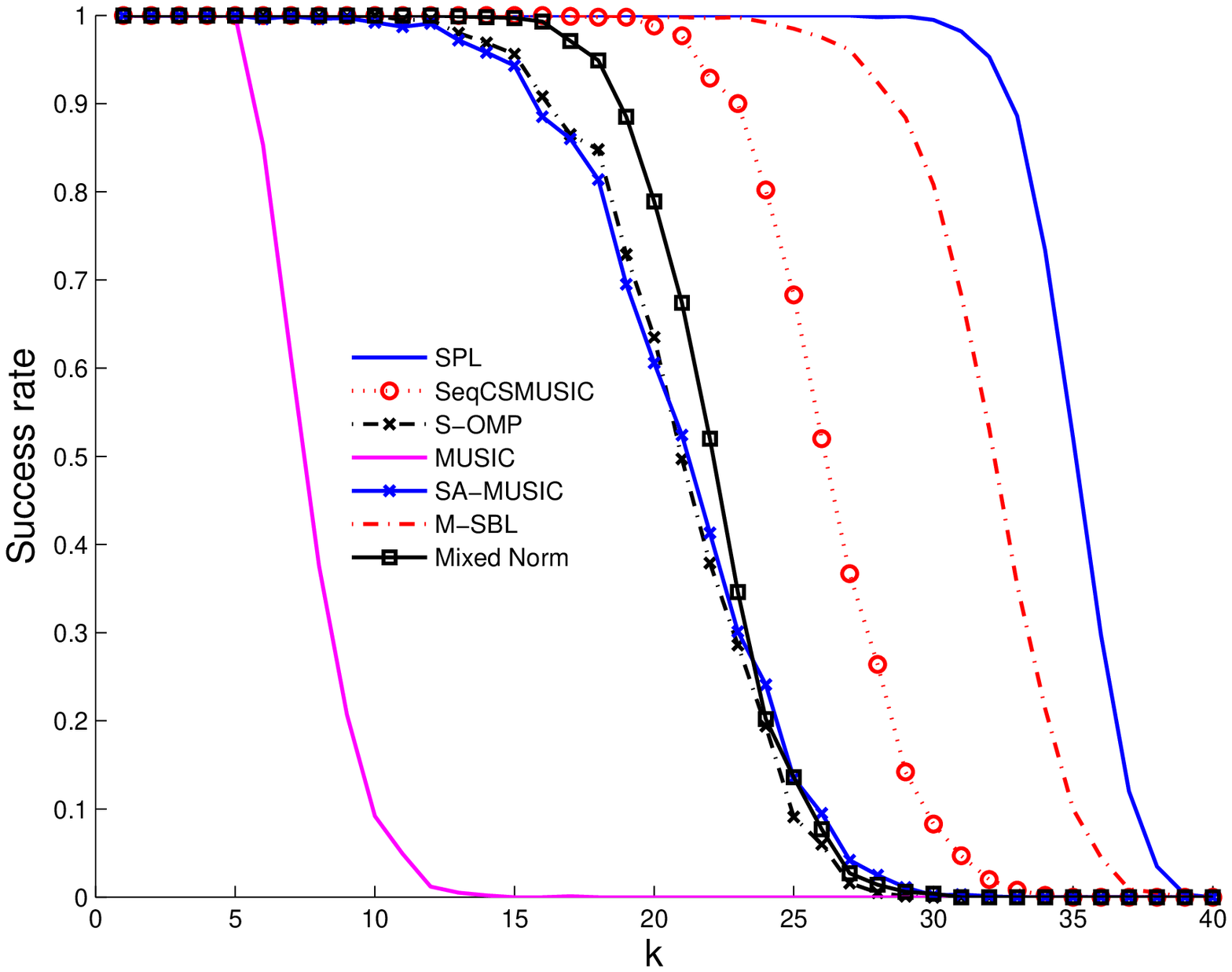}\includegraphics[width=6cm]{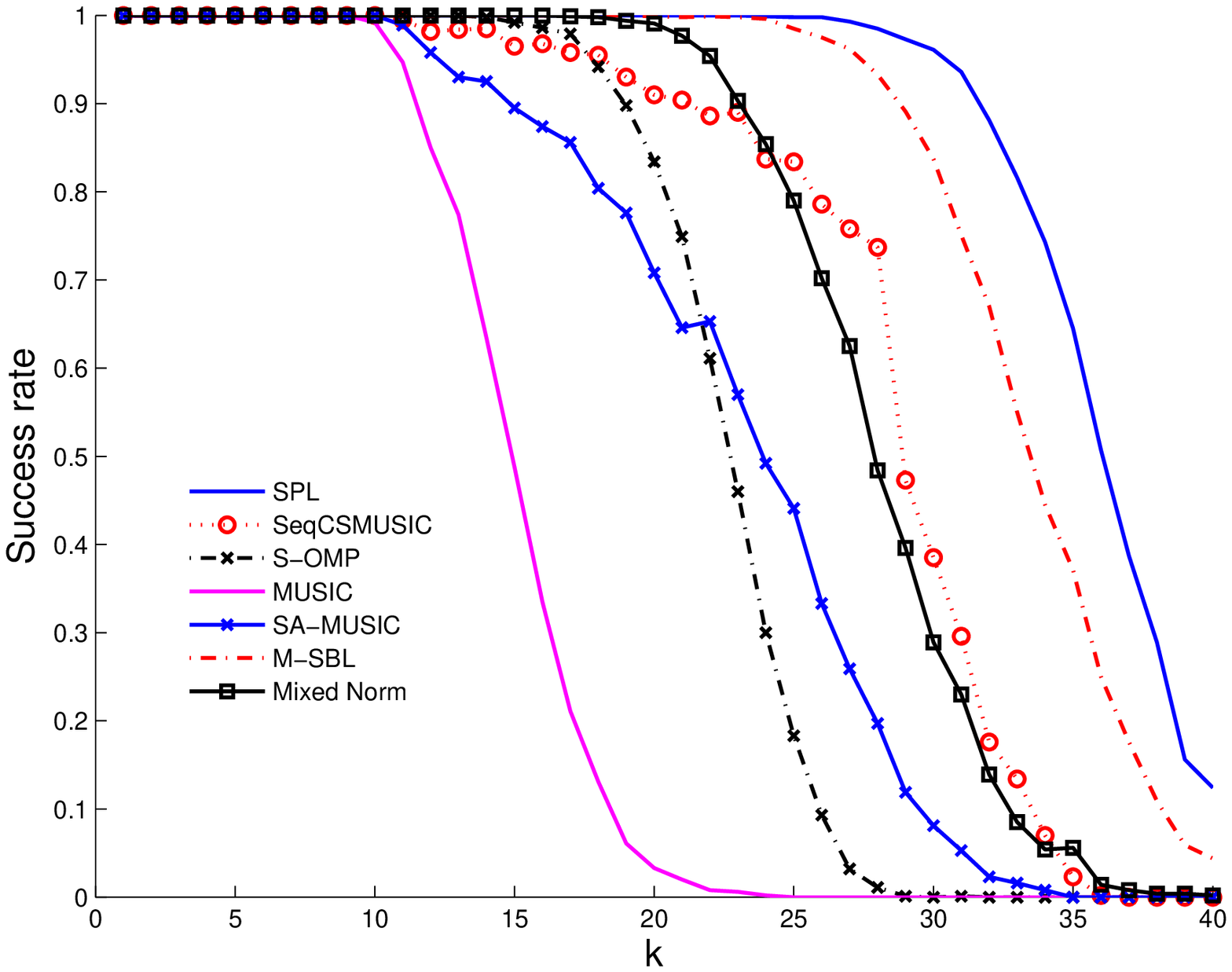}\includegraphics[width=6cm]{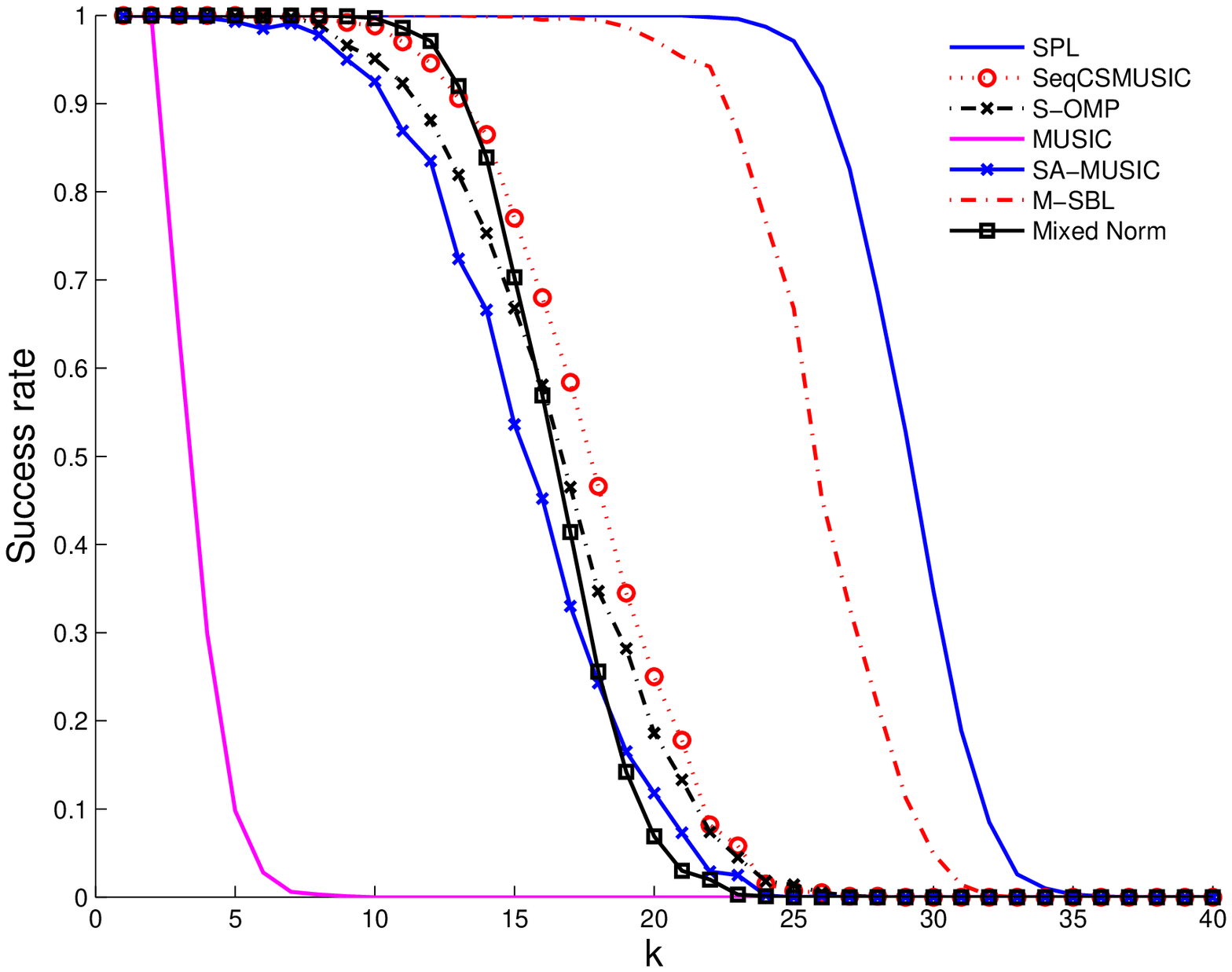}}\\
\centering{\mbox{(a)}\hspace{6cm}\mbox{(b)}\hspace{6cm}\mbox{(c)}}
\caption{Performance of various joint sparse recovery algorithms for varying sparsity level at $N=256$.  The simulation parameters are (a) $m=40, r=5, \tau=1$ and SNR=30dB, and  (b) $m=40, r=12, \tau=1$ and SNR=10dB,  and (c) $m=40,r=15,\tau=0.5$ and SNR=30dB, respectively.}
\label{fig:variousMMV_k}
\end{figure}

Figs.~\ref{fig:variousMMV_k}(a)(b)(c) compares the performance of various MMV algorithm by varying the sparsity level. Here,  $m$ and  ${\rm rank}(Y)$ are fixed and the sparsity levels changes, and we calculated the perfect reconstruction ratio.  Again, SPL outperforms all existing methods for various SNR and conditions numbers.

\subsection{Fourier Measurements Cases}

Fig.~\ref{fig:fourier} illustrates the results of the comparison when the measurement are from Fourier sensing matrix.  Similar to Gaussian sensing matrix,  consistent improvement of SPL over M-SBL and  other algorithms under various conditions have been observed.

\begin{figure}[hbpt]
\centering{\includegraphics[width=7cm]{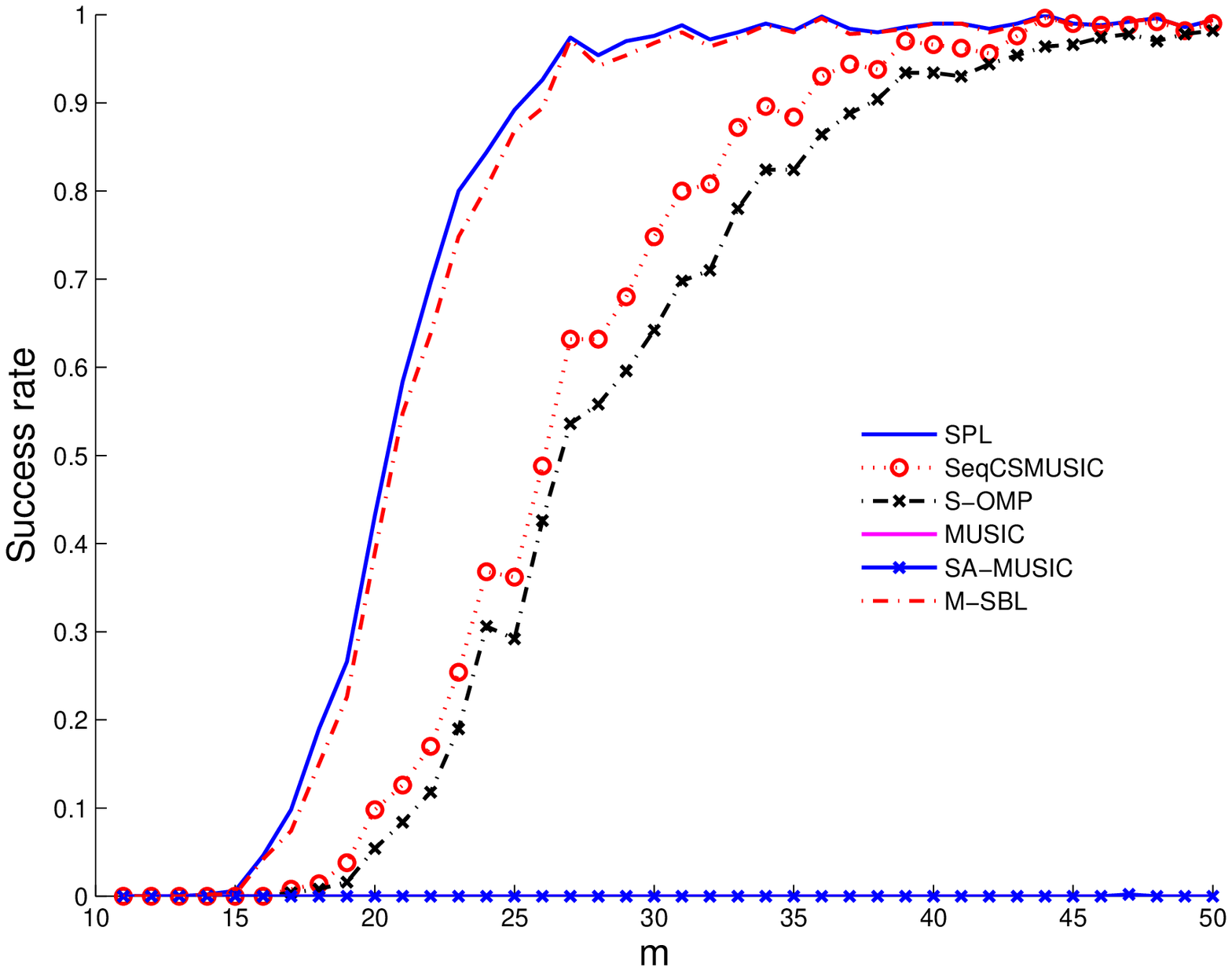}\includegraphics[width=7cm]{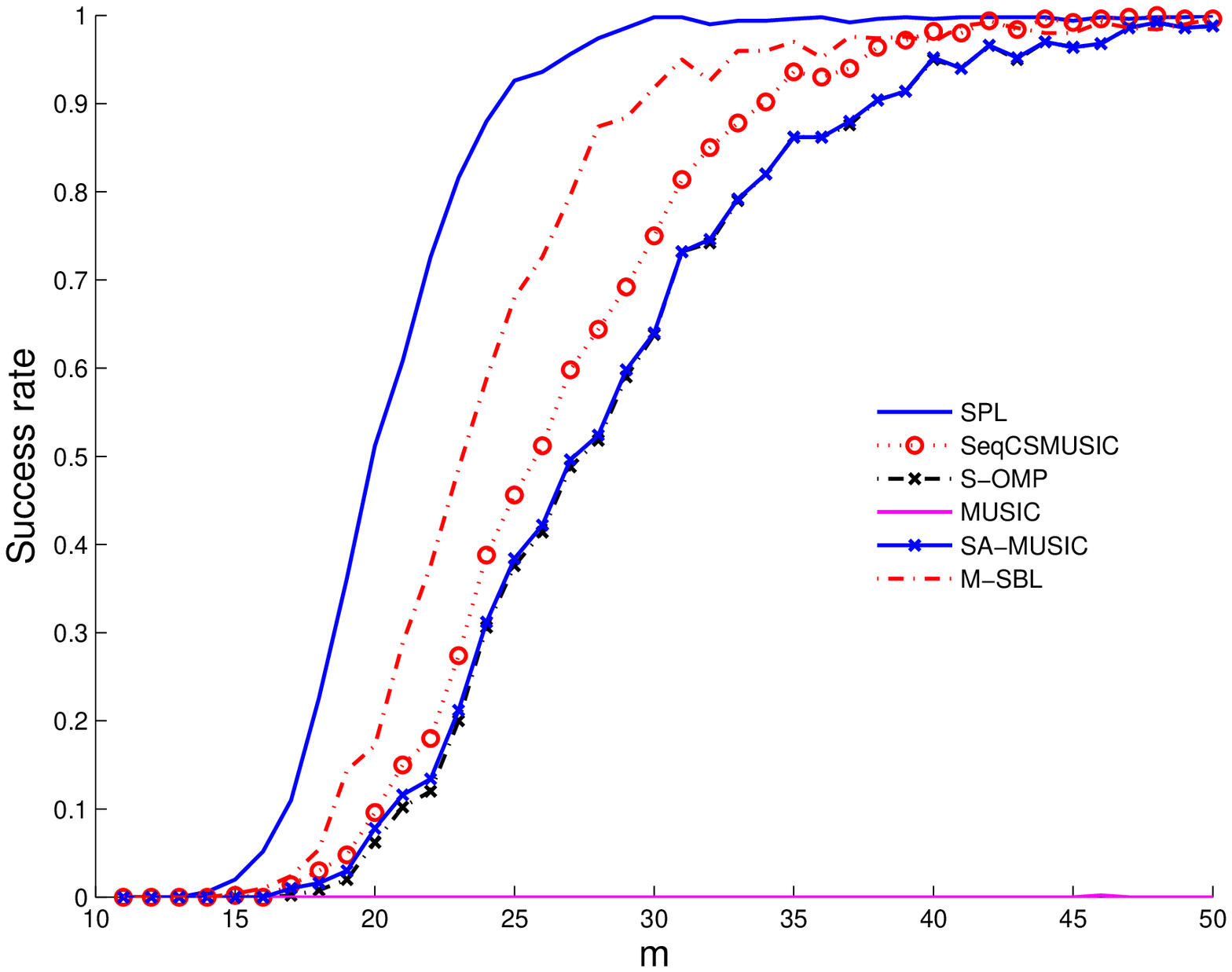}}\\
\centering{\mbox{(a)}\hspace{7cm}\mbox{(b)}}\vspace*{0.5cm}
\centering{\includegraphics[width=7cm]{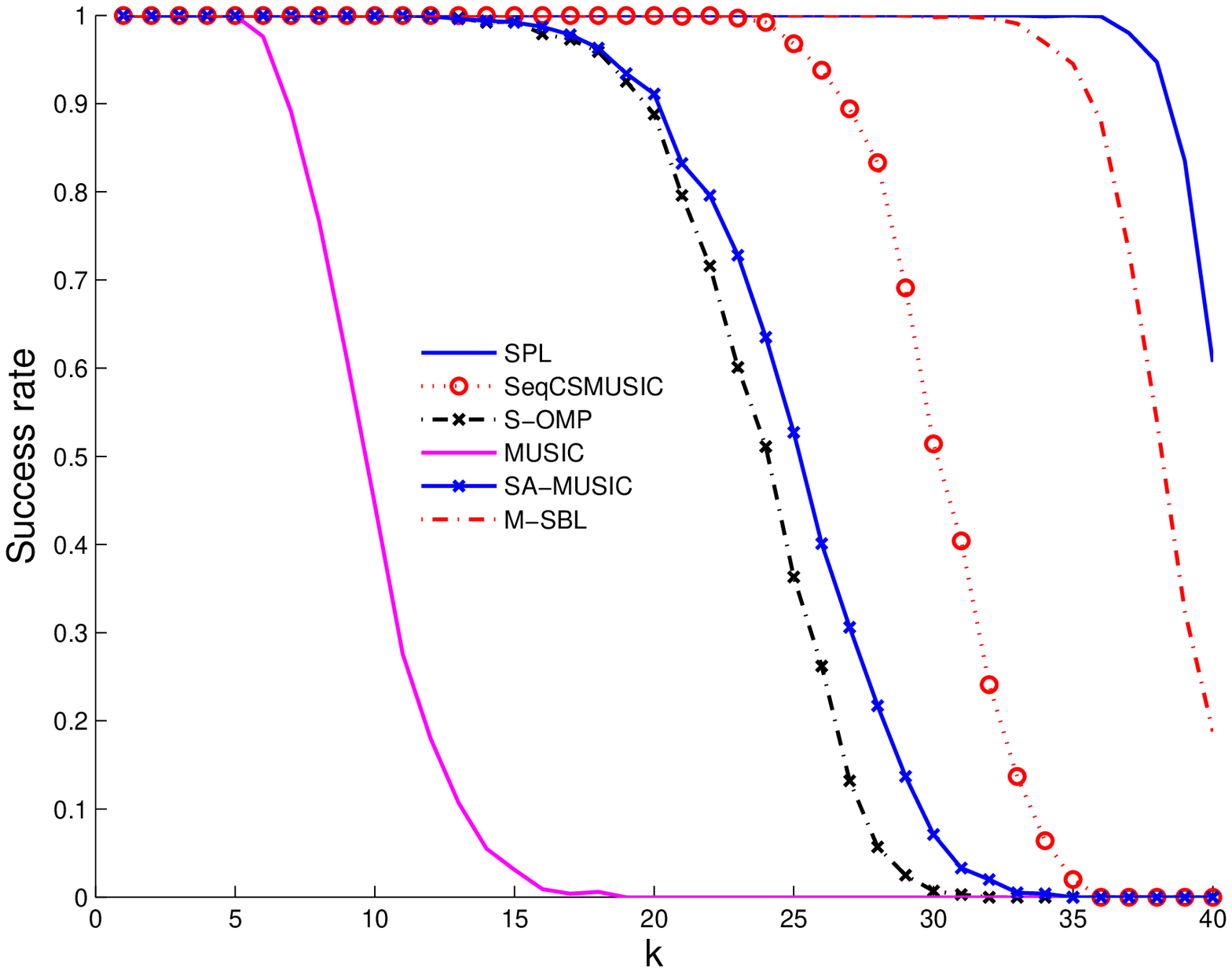}\includegraphics[width=7cm]{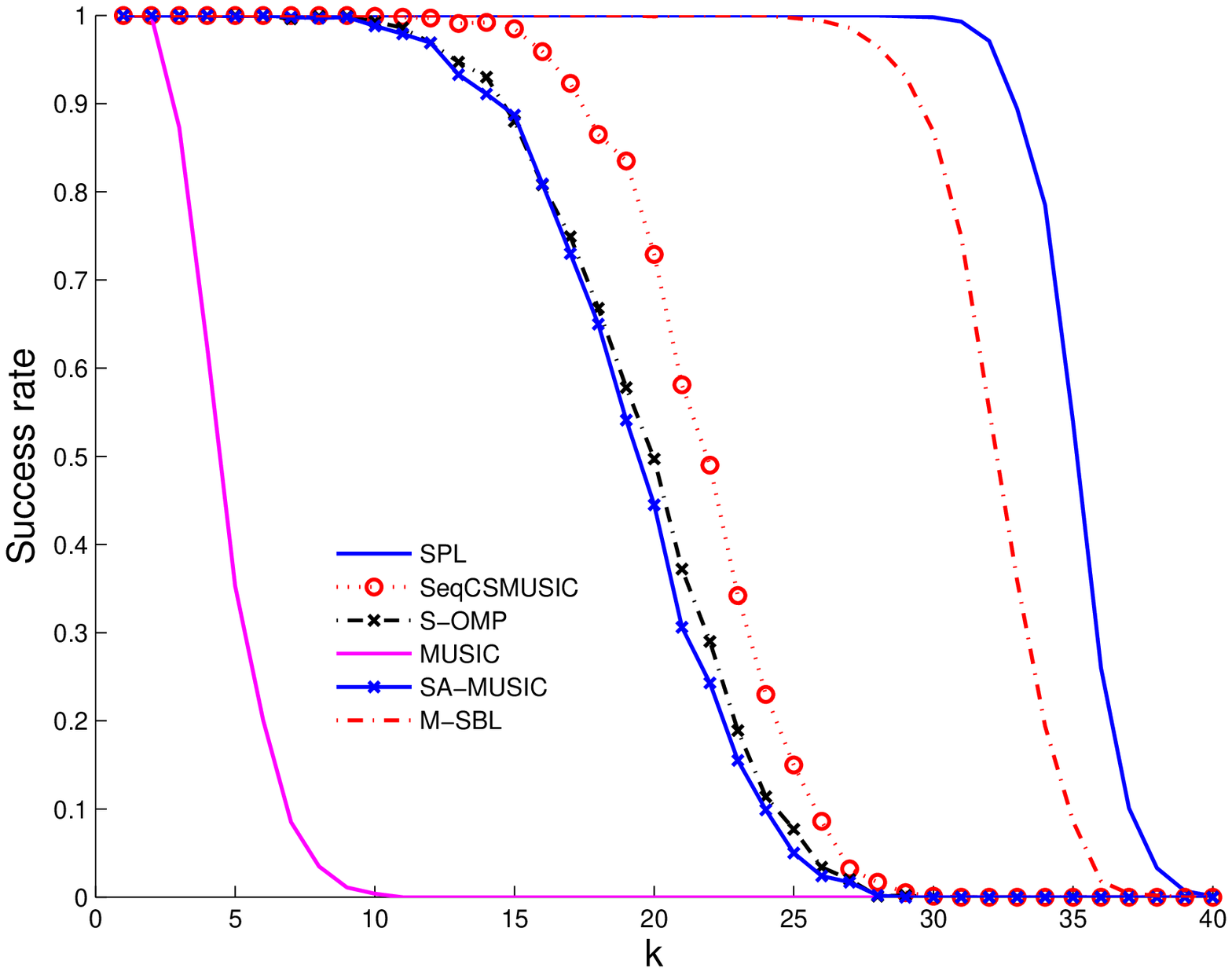}}\\
\centering{\mbox{(c)}\hspace{7cm}\mbox{(d)}}\vspace*{0.5cm}
\caption{Performance of various joint sparse recovery algorithms at $n=128$ when the sensing matrix is from Fourier matrix and SNR=30dB. The simulation parameters are
(a) $r=8, N=16, \tau=1, k=10$, (b) $r=8, N=256, \tau=1, k=10$, 
(c) $r=5, m=40, N=256, \tau=1$, (d)$r=15, m=40, N=256, \tau=0.5$, respectively.}
\label{fig:fourier}
\end{figure}

%
%
%
%
%

\section{CONCLUSION}
\label{sec:conclusion}

In this paper, we derived a new MMV algorithm called subspace penalized sparse learning (SPL) to address a joint sparse recovery problem, in which the unknown signals  share a common non-zero support.  The SPL algorithm was inspired by the observation that the $\log\det(\cdot)$  term in M-SBL is a rank proxy for a partial sensing matrix, and similar rank criteria exist in subspace-based greedy MMV algorithms like CS-MUSIC and SA-MUSIC.  Furthermore, we  proved that instead of  ${\rm rank}(A\Gamma^{1/2})$,  minimizing  ${\rm rank}(Q^*A\Gamma^{1/2})$ is a more direct way of imposing joint sparsity since its global minimizer can provide the true joint support. 
To impose such a subspace constraint as a penalty,  the SPL algorithm employs the Schatten-$p$ quasi norm rank penalty and was implemented as an alternating minimisation method.
Theoretical analysis showed that as $p\rightarrow 0$, 
the global minimizer  of the SPL is equivalent to the global minimiser of the $l_0$ MMV solution.
 We further demonstrated that compared to M-SBL,  our SPL is more robust to recovering badly conditioned $X_*$. 
 With numerical simulations, we demonstrated that SPL consistently outperforms all existing state-of-the art algorithms including M-SBL.

\section*{Acknowledgements}

This work was supported by Korea Science and Engineering Foundation under Grant NRF-2014R1A2A1A11052491.

\bibliographystyle{IEEEtran}
\bibliography{totalbiblio_bispl}

\begin{thebibliography}{10}
\providecommand{\url}[1]{#1}
\csname url@samestyle\endcsname
\providecommand{\newblock}{\relax}
\providecommand{\bibinfo}[2]{#2}
\providecommand{\BIBentrySTDinterwordspacing}{\spaceskip=0pt\relax}
\providecommand{\BIBentryALTinterwordstretchfactor}{4}
\providecommand{\BIBentryALTinterwordspacing}{\spaceskip=\fontdimen2\font plus
\BIBentryALTinterwordstretchfactor\fontdimen3\font minus
  \fontdimen4\font\relax}
\providecommand{\BIBforeignlanguage}[2]{{%
\expandafter\ifx\csname l@#1\endcsname\relax
\typeout{** WARNING: IEEEtran.bst: No hyphenation pattern has been}%
\typeout{** loaded for the language `#1'. Using the pattern for}%
\typeout{** the default language instead.}%
\else
\language=\csname l@#1\endcsname
\fi
#2}}
\providecommand{\BIBdecl}{\relax}
\BIBdecl

\bibitem{Kim2010CMUSIC}
J.~Kim, O.~Lee, and J.~Ye, ``{Compressive MUSIC: revisiting the link between
  compressive sensing and array signal processing},'' \emph{IEEE Trans. on
  Information Theory}, vol.~58, no.~1, pp. 278--301, 2012.

\bibitem{chen2006trs}
J.~Chen and X.~Huo, ``{Theoretical results on sparse representations of
  multiple measurement vectors},'' \emph{IEEE Trans. on Signal Processing},
  vol.~54, no.~12, pp. 4634--4643, 2006.

\bibitem{cotter2005ssl}
S.~Cotter, B.~Rao, K.~Engan, and K.~Kreutz-Delgado, ``{Sparse solutions to
  linear inverse problems with multiple measurement vectors},'' \emph{IEEE
  Trans. on Signal Processing}, vol.~53, no.~7, p. 2477, 2005.

\bibitem{Mishali08rembo}
M.~Mishali and Y.~C. Eldar, ``{Reduce and boost: Recovering arbitrary sets of
  jointly sparse vectors},'' \emph{IEEE Trans. on Signal Processing}, vol.~56,
  pp. 4692--4702, 2009.

\bibitem{Berg09jrmm}
E.~Berg and M.~P. Friedlander, ``Theoretical and empirical results for recovery
  from multiple measurements,'' \emph{IEEE Trans. on Information Theory},
  vol.~56, no.~5, pp. 2516--2527, 2010.

\bibitem{Lee2010SAMUSIC}
K.~Lee, Y.~Bresler, and M.~Junge, ``Subspace methods for joint sparse
  recovery,'' \emph{IEEE Trans. on Information Theory}, vol.~58, no.~6, pp.
  3613--3641, 2012.

\bibitem{Feng97}
P.~Feng, ``Universal minimum-rate sampling and spectrum-blind reconstruction
  for multiband signals,'' Ph.{D}. Dissertation, University of Illinois,
  Urbana-Champaign, 1997.

\bibitem{tropp2006ass}
J.~Tropp, A.~Gilbert, and M.~Strauss, ``{Algorithms for simultaneous sparse
  approximation. {P}art {I}: Greedy pursuit},'' \emph{Signal Processing},
  vol.~86, no.~3, pp. 572--588, 2006.

\bibitem{eldar2009compressed}
Y.~Eldar, P.~Kuppinger, and H.~Bolcskei, ``{Block-sparse signals: {U}ncertainty
  relations and efficient recovery},'' \emph{IEEE Trans. on Signal Processing},
  vol.~58, pp. 3042--3054, 2010.

\bibitem{baraniuk2010model}
R.~Baraniuk, V.~Cevher, M.~Duarte, and C.~Hegde, ``Model-based compressive
  sensing,'' \emph{IEEE Trans. on Information Theory}, vol.~56, no.~4, pp.
  1982--2001, 2010.

\bibitem{obozinski2011support}
G.~Obozinski, M.~Wainwright, and M.~Jordan, ``Support union recovery in
  high-dimensional multivariate regression,'' \emph{The Annals of Statistics},
  vol.~39, no.~1, pp. 1--47, 2011.

\bibitem{wipf2007ebs}
D.~Wipf and B.~Rao, ``An empirical {B}ayesian strategy for solving the
  simultaneous sparse approximation problem,'' \emph{IEEE Trans. on Signal
  Processing}, vol.~55, no. 7 Part 2, pp. 3704--3716, 2007.

\bibitem{KimSeqCS2012}
J.~M. Kim, O.~K. Lee, and J.~C. Ye, ``Improving noise robustness in
  subspace-based joint sparse recovery,'' \emph{IEEE Trans. on Signal
  Processing (in press)}, 2012.

\bibitem{wipf2011latent}
D.~Wipf, B.~Rao, and S.~Nagarajan, ``Latent variable bayesian models for
  promoting sparsity,'' \emph{IEEE Trans. on Information Theory}, vol.~57,
  no.~9, p. 6236, 2011.

\bibitem{wipf2006bayesian}
D.~Wipf, ``{Bayesian methods for finding sparse representations},'' Ph.D.
  dissertation, University of California, San Diego, 2006.

\bibitem{ChZa96}
E.~K.~P. Chong and S.~H. Zak, \emph{An Introduction to Optimization}.\hskip 1em
  plus 0.5em minus 0.4em\relax New York: Wiley-Interscience, 1996.

\bibitem{mohan2010iterative}
K.~Mohan and M.~Fazel, ``Iterative reweighted least squares for matrix rank
  minimization,'' in \emph{48th IEEE Annual Allerton Conference on
  Communication, Control, and Computing}, 2010, pp. 653--661.

\bibitem{fazel2003log}
M.~Fazel, H.~Hindi, and S.~Boyd, ``Log-det heuristic for matrix rank
  minimization with applications to hankel and euclidean distance matrices,''
  in \emph{American Control Conference, 2003. Proceedings of the 2003},
  vol.~3.\hskip 1em plus 0.5em minus 0.4em\relax IEEE, 2003, pp. 2156--2162.

\bibitem{petersen2008matrix}
K.~Petersen and M.~Pedersen, ``The matrix cookbook,'' \emph{Technical
  University of Denmark}, pp. 7--15, 2008.

\bibitem{tao1997convex}
P.~Tao and L.~An, ``Convex analysis approach to dc programming: Theory,
  algorithms and applications,'' \emph{Acta Mathematica Vietnamica}, vol.~22,
  no.~1, pp. 289--355, 1997.

\bibitem{tao1998dc}
P.~Tao and L.~T.~H. An, ``A {DC} optimization algorithm for solving the
  trust-region subproblem,'' \emph{SIAM Journal on Optimization}, vol.~8,
  no.~2, pp. 476--505, 1998.

\bibitem{van2008probing}
E.~Van Den~Berg and M.~Friedlander, ``Probing the pareto frontier for basis
  pursuit solutions,'' \emph{SIAM Journal on Scientific Computing}, vol.~31,
  no.~2, p. 890, 2008.

\end{thebibliography}
\end{document}